\definecolor{darkgreen}{rgb}{0.0,0,0.9}
\DeclareMathOperator{\Exp}{\mathbb{E}}
\renewcommand{\P}[1]{{\mathbb{P}}\left[#1\right]}
\newcommand{\hP}[1]{{\bf P}\left[#1\right]}
\newcommand{\PP}[2]{{\mathbb{P}}_{#1}\left[#2\right]}
\newcommand{\hPP}[2]{{\bf P}_{#1}\left[#2\right]}
\newcommand{\cPP}[2]{{\cal{P}}_{#1}\left[#2\right]}
\newcommand{\kPP}[3]{{\bf P}^{#2}_{#1}\left[#3\right]}
\newcommand{\E}[1]{{\Exp}\left[#1\right]}
\newcommand{\hE}[1]{{\bf E}\left[#1\right]}
\newcommand{\EE}[2]{\Exp_{#1}\left[#2\right]}
\newcommand{\hEE}[2]{{\bf E}_{#1}\left[#2\right]}
\newcommand{\kEE}[3]{{\bf E}^{#2}_{#1}\left[#3\right]}
\newcommand{\tp}[1]{{#1'}}
\providecommand{\norm}[1]{\lVert#1\rVert}
\declaretheorem[numberwithin=section]{theorem}
\declaretheorem[sibling=theorem]{lemma}
\declaretheorem[sibling=theorem]{proposition}
\declaretheorem[sibling=theorem]{claim}
\def\bx{{\bf x}}
\def\by{{\bf y}}
\def\bz{{\bf z}}
\def\bv{{\bf v}}
\def\bp{{\bf p}}
\def\bpsi{{\bm{\pi}}}
\def\b1{{\bf 1}}
\def\eps{{\epsilon}}
\def\cD{{\cal D}}
\def\G{{\cal G}}
\DeclareMathOperator{\rem}{rem}
\DeclareMathOperator{\esc}{esc}
\DeclareMathOperator{\rank}{rank}
\DeclareMathOperator{\tr}{Tr}
\DeclareMathOperator{\polylog}{polylog}
\def\gensam{{\textup{\texttt{GenerateSample}}}}
\def\Evo{{\textup{\texttt{ParESP}}}}
\def\SSE{{\textup{\texttt{Threshold}}}}
\def\nibble{{\textup{\texttt{Nibble}}}}
\begin{document}

\title{Approximating the Expansion Profile and \\
Almost Optimal Local Graph Clustering}
\author{Shayan Oveis Gharan
\thanks{Department of Management Science and Engineering, Stanford University. Supported by a Stanford Graduate Fellowship. Email:\protect\url{shayan@stanford.edu}.}
\and
Luca Trevisan\thanks{Department of Computer Science, Stanford University. This material is based upon  work supported by the National Science Foundation under grant No.  CCF 1017403.
Email:\protect\url{trevisan@stanford.edu}.}
}

\date{}
\maketitle
\begin{abstract}
Spectral partitioning is a simple, nearly-linear time, algorithm to
find sparse cuts, and the Cheeger inequalities provide a worst-case
guarantee for the quality of the approximation found by the
algorithm. Local graph partitioning algorithms
\cite{ST08,ACL06,AP09} run in time that is nearly linear in the size
of the output set, and their approximation guarantee is worse than
the guarantee provided by the Cheeger inequalities by a
poly-logarithmic $\log^{\Omega(1)} n$ factor. It has been an open
problem to design a local graph clustering algorithm with an
approximation guarantee close to the guarantee of the Cheeger
inequalities and with a running time nearly linear in the size of
the output.

 In this paper we solve this problem; we design an algorithm with the same guarantee (up to a constant factor) as the Cheeger inequality, that runs in time slightly super linear in the size of the output.
  This is the first sublinear (in the size of the input) time algorithm with almost the same guarantee
 as the Cheeger's inequality.
  As a byproduct of our results, we prove a bicriteria approximation algorithm for
 the expansion profile of any graph.
 Let $\phi(\gamma) = \min_{\mu(S) \leq \gamma}\phi(S)$. There is a polynomial time algorithm that, for any $\gamma,\eps>0$,  finds a set $S$ of volume $\mu(S)\leq 2\gamma^{1+\eps}$, and expansion $\phi(S)\leq \sqrt{2\phi(\gamma)/\eps}$.
Our proof techniques also provide a simpler proof of the structural result of Arora, Barak, Steurer \cite{ABS10}, that can be applied to irregular graphs.

Our main technical tool is that for any set $S$ of vertices of a graph, a lazy $t$-step random walk started from a randomly chosen vertex of $S$, will remain entirely inside $S$ with probability at least $(1-\phi(S)/2)^t$.
This itself provides a new lower bound to the uniform mixing time of any finite states reversible markov chain.
%We show that for any set $S\subseteq V$, the uniform mixing of any reversible random walk is at least $\Omega(-\log{\pi(S)}/\phi(S))$, where $\pi(.)$
%is the stationary distribution of the walk.
\end{abstract}

\section{Introduction}
Let $G=(V,E)$ be an undirected graph, with $n:=|V|$ vertices, and let $d(v)$ denote the degree of vertex $v\in V$. The {\em measure (volume)} of a set $S\subseteq V$ is defined as the sum of the degree of vertices in $S$,
$$\mu(S):=\sum_{v\in S} d(v).$$
  The {\em conductance} of a set $S$ is defined as
$$ \phi(S) := \partial(S)/\mu(S)$$
where $\partial(S)$ denotes the number of edges that leaves $S$.
Let
$$\phi (G) :=\min_{S : \mu(S) \leq \mu(V)/2}  \phi(S) $$

be the conductance (uniform sparsest cut) of $G$. The Cheeger
inequalities \cite{AM85,Alon86} prove that the spectral partitioning
algorithms finds, in nearly linear time, a  $O(1/\sqrt{\phi(G)})$
approximation to the uniform sparsest cut problem. Most notably, the
approximation factor does not depend of the size of the graph; in
particular the Cheeger inequalities imply a constant factor
approximation, if $\phi(G)$ is  constant. Variants of the spectral
partitioning algorithm are widely used in practice
\cite{Kleinberg99,SM00,TM06}.

Often, one is interested in applying a sparsest cut approximation algorithm {\em iteratively}, that is, first
find an approximate sparsest cut in the graph, and then recurse on one or both of the subgraphs
induced by the set found by the algorithm and by its complement. Such iteration might be used
to find a {\em balanced} sparse cut if one exists (c.f. \cite{OSV12}), or to find a good {\em clustering} of the graph, an
approach that lead to approximate clusterings with good worst-case guarantees, as shown by
Kannan, Vempala and Vetta \cite{KVV04}. Even though each application of the spectral partitioning
algorithm runs in nearly linear time, iterated applications of the algorithm can result in a quadratic
running time. %Recently, Orecchia, Sachdeva and Vishnoi \cite{OSV11} design a $O(1/\sqrt{\phi(\gamma)})$ approximation to the balanced separated problem that runs in almost linear time

Spielman and Teng \cite{ST04}, and subsequently \cite{ACL06, AP09}  studied local graph partitioning algorithms that find a set $S$ of approximately minimal conductance in time nearly linear in the size of the output set $S$. Note that the running time can be sub linear in the size of the input graph if the algorithm finds a small output set $S$.
When iterated, such an algorithm finds a balanced sparse cut in nearly linear time in
the size of the graph, and can be used to find a good clustering in nearly linear time as well.

Another advantage of such ``local'' algorithms is that if there are both large and small sets of
near-optimal conductance, the algorithm is more likely to find the smaller sets. Thus, such algorithms
can be used to approximate the ``small-set expander'' problem, which is related
to the unique games conjecture \cite{RS10} and the expansion profile of a graph (that is,
what is the cut of smallest conductance among all sets of a given volume). Finding small, low-conductance,
sets is also interesting in clustering applications. In a social network, for example, a low-conductance
set of users in the ``friendship'' graph represents a ``community'' of users who are significantly more likely
to be friends with other members of the community than with non-members, and discovering
such communities has several applications. While large communities might correspond to large-scale,
known, factors, such as the fact that American users are more likely to have other Americans as friends,
or that people are more likely to have friends around their age, small communities contain more
interesting information.

A {\em local graph clustering} algorithm, is a {\em local} graph algorithm
that finds a non-expanding set in the local neighborhood of a given vertex $v$, in time proportional to the size of the
output set. The {\em work/volume ratio} of such an algorithm, which is the ratio of the the computational time of the algorithm in a single run, and the volume of the output set, may depend
only poly logarithmically to the size of the graph.

The problem first studied
in the remarkable work of Spielman and Teng \cite{ST04}. Spielman
and Teng design an algorithm \nibble~such that for any set $U\subseteq V$,
if the initial vertex, $v$, is sampled randomly according to the degree of vertices in $U$, with a constant probability, \nibble~finds a set of conductance $O(\phi^{1/2}(U)\log^{3/2}{n})$, with a work/volume ratio of  $O(\phi^{-2}(U)\polylog(n))$,
\nibble~finds the desired set by looking at the {\em threshold sets} of the probability distribution of a $t$-step random walk started at $v$. To achieve the desired computational time they keep the support of the probability distribution small by removing a small portion of the probability mass at each step.

Andersen, Chung and Lang \cite{ACL06}, used the approximate PageRank vector rather than approximate random walk distribution, and they
managed to improve the conductance of the output set to $O(\sqrt{\phi(U)\log{n}})$, and the work/volume ratio to $O(\phi^{-1}(U)\polylog{n})$. More recently, Andersen and Peres \cite{AP09}, use the {\em evolving set process} developed in the work of Diaconis and Fill \cite{DF90}, and they improved the work/volume ratio to $O(\phi^{-1/2}(U) \polylog{n})$,
while achieving the same guarantee as \cite{ACL06} on the conductance of the output set.

%Spectral partitioning, and in particular,  Cheeger's inequality \cite{AM85,Alon86,JS89}, provide a simple algorithm that finds a set of conductance $O(\sqrt{\min_U\phi(U)})$,  (i.e., without any dependency on $n$),  in {\em linear} time.
%Despite many efforts  all of the three papers mentioned above failed to eliminate the dependency to $\log{n}$.
%, even if the optimum set is significantly smaller than $n$.
It has been a long standing open problem to design a local variant of the Cheeger's inequalities: that is to provide a {\em sublinear} time algorithm with an approximation guarantee that does not depend on the size of $G$, assuming that the size of the optimum set is sufficiently smaller than $n$, and a randomly chosen vertex of the optimum set is given.
In this work we answer this question, and we prove the following theorem:
\begin{restatable}{theorem}{ssefast}
\label{thm:ssefast} $\Evo(v,\gamma,\phi,\eps)$ takes as input a
starting vetex $v\in V$, a target conductance $\phi\in (0,1)$, a
target size $\gamma$, and $0<\eps<1$. For a given run of the
algorithm it outputs a set $S$ of vertices with the expected work
per volume ratio of $O(\gamma^{\eps} \phi^{-1/2}\log^2{n})$. If
$U\subseteq V$ is a set of vertices that satisfy $\phi(U)\leq \phi$,
and $\mu(U)\leq \gamma$, then there is a subset $U'\subseteq U$ with
volume at least $\mu(U)/2$, such that if $v\in U'$, with a constant
probability $S$ satisfies,
\begin{enumerate}
\item $\phi(S) = O(\sqrt{\phi/\eps})$,
\item $\mu(S) \leq O(\gamma^{1+\eps})$.
%\item $\mu(S\cap U) = \Omega(\gamma^{-\eps}\mu(S))$.
\end{enumerate}
\end{restatable}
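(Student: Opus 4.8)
The plan is to have $\Evo(v,\gamma,\phi,\eps)$ simulate a space-bounded (truncated) version of the Diaconis--Fill evolving set process (ESP) started from the singleton $\{v\}$, run it for $T=\Theta(1/\phi)$ steps while at each step discarding an $O(1/(T\log n))$ fraction of the current volume so that the level-set update costs only $O(\mu(S_s)\log n)$, and return the sweep set of smallest conductance among $S_0,\dots,S_T$ whose volume is at most $\gamma^{1+\eps}$. Correctness will split into a \emph{trapping} claim --- that for $v$ in a large core of $U$ the whole trajectory stays at volume $O(\gamma)$, so the run is never aborted --- which follows cheaply from the escape bound (the paper's main technical tool: a lazy $t$-step walk from a $\mu$-random vertex of any set $S$ stays entirely in $S$ with probability $\ge(1-\phi(S)/2)^t$), and a \emph{Cheeger-type} claim that a trapped process must exhibit a sparse sweep cut. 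The latter is the delicate part, and it is the only place where both the exponent $1+\eps$ and the bound $\sqrt{\phi/\eps}$ (in place of the classical $\sqrt{\phi\log n}$) are used.

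\textbf{The core and the trapping claim.} Applying the escape bound to $U$, a lazy $T$-step walk from a $\mu$-random vertex of $U$ stays inside $U$ with probability at least $(1-\phi(U)/2)^T\ge(1-\phi/2)^T$, which exceeds $3/4$ once the constant in $T=\Theta(1/\phi)$ is taken small enough. Writing $\sigma(u):=\Prob[\text{the }T\text{-step lazy walk from }u\text{ stays in }U]$ and applying Markov's inequality to $1-\sigma$ under $\mu|_U$, the core $U':=\{u\in U:\sigma(u)\ge 1/2\}$ satisfies $\mu(U')\ge\bigl(2(1-\phi/2)^T-1\bigr)\mu(U)\ge\mu(U)/2$; fix any $v\in U'$. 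Using the standard ESP/heat-kernel identity $p^{s}(v,y)=\Exp_{\mathrm{ESP}}[\,d(y)\mathbf{1}[y\in S_s]/\mu(S_s)\,]$ (Morris and Peres), and since staying in $U$ for $s\le T$ steps is implied by staying for $T$ steps,
\[
\Exp_{\mathrm{ESP}}\!\Big[\tfrac{\mu(U\cap S_s)}{\mu(S_s)}\Big]\;=\;p^{s}(v,U)\;\ge\;\sigma(v)\;\ge\;\tfrac{1}{2}\qquad(s\le T).
\]
As the bracketed quantity lies in $[0,1]$, with probability $\Omega(1)$ we get $\mu(U\cap S_s)\ge\Omega(\mu(S_s))$ and hence $\mu(S_s)\le O(\mu(U))\le O(\gamma)\le\gamma^{1+\eps}$; since $\mu(S_s)$ is a martingale, Doob's maximal inequality (or a union bound over the $O(\log\gamma)$ dyadic scales together with the previous line) upgrades this to: with probability $\Omega(1)$ the entire truncated trajectory stays at volume $O(\gamma)$, so the run is never aborted.

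\textbf{The conductance bound.} On the trapping event, suppose for contradiction that no sweep set of volume at most $\gamma^{1+\eps}$ seen during the run has conductance below $\psi^\ast:=c\sqrt{\phi/\eps}$. The Andersen--Peres growth estimate $\Exp[\sqrt{\mu(S_{s+1})}\mid\mathcal{F}_s]\le\sqrt{\mu(S_s)}\,(1-\Omega((\psi^\ast)^2))$ then contracts the relevant potential by a factor $e^{-\Theta(1/\eps)}$ over the $T=\Theta(1/\phi)$ steps, while the displayed inequality keeps $\Omega(1)$ of the process's ``mass'' correlated with the volume-$\le\gamma$ set $U$ throughout; the reason for permitting sweep cuts up to volume $\gamma^{1+\eps}$ rather than only $\gamma$ is precisely that this extra room lets the local Lov\'asz--Simonovits recursion, truncated at that volume, be iterated for the full $\Theta(1/\phi)$ steps without paying the $\sqrt{\gamma}$ prefactor that would otherwise only yield $\psi^\ast\gtrsim\sqrt{\phi\log\gamma}$. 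Equivalently --- and this is the reformulation that also gives the simplified ABS structural theorem --- the escape bound says exactly that $\mathbf{1}_U$ has lazy-walk Rayleigh quotient $\ge 1-\phi/2$, hence lies, up to $\eps$-error in $\mu$-norm, in the span of the walk's eigenvectors with eigenvalue $>1-\phi/(2\eps)$; a Cheeger-type rounding of a $\mu$-sparse vector from this top eigenspace gives conductance $O(\sqrt{\phi/\eps})$ and, because the rounded vector is $\sqrt{\eps}$-concentrated on $U$, volume at most $\gamma\cdot\gamma^{O(\eps)}=O(\gamma^{1+\eps})$, with the ESP serving as the local, implicit way of performing this rounding without computing eigenvectors. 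Either way, some $S_s$ (or its complement) is a set with $\phi(S_s)=O(\sqrt{\phi/\eps})$ and $\mu(S_s)=O(\gamma^{1+\eps})$, which are conclusions (1) and (2).

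\textbf{Work/volume, and the main obstacle.} Each of the $T=O(1/\phi)$ steps costs $O(\mu(S_s)\log n)=O(\gamma^{1+\eps}\log n)$; following the Andersen--Peres accounting --- targeting an output volume in a dyadic window, running the chain only while $\mu(S_s)$ stays inside it, and amortizing the restarts needed to hit it against the volume martingale --- saves a further $\phi^{1/2}$, so a successful run does $O(\gamma^{1+\eps}\phi^{-1/2}\log^2 n)$ work against an output of volume $\Omega(\gamma)$ (runs that die early contribute only lower-order expected work), giving the ratio $O(\gamma^{\eps}\phi^{-1/2}\log^2 n)$, and constant success probability needs only $O(1)$ further restarts since the trapping event has probability $\Omega(1)$ for $v\in U'$. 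The two genuine obstacles I anticipate are (a) the conductance step --- engineering the local Lov\'asz--Simonovits / top-eigenspace argument so that ``trapped at volume $O(\gamma)$ for $\Theta(1/\phi)$ steps'' yields a cut of conductance $\sqrt{\phi/\eps}$ \emph{and} volume $\gamma^{1+\eps}$, which is exactly where the $1+\eps$ slack is spent --- and (b) squeezing the work bound down to $\phi^{-1/2}$ despite a horizon one is forced to take as $\Theta(1/\phi)$, i.e. the amortization against the volume martingale; the escape bound is the conceptual lever that removes the $\log n$ loss of the earlier local algorithms.
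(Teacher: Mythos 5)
The central step of your argument fails, and it fails at exactly the point you flag as ``obstacle (a)''. With horizon $T=\Theta(1/\phi)$, neither the Andersen--Peres martingale estimate nor the truncated Lov\'asz--Simonovits recursion can yield conductance better than $O(\sqrt{\phi\log\gamma})$: the martingale argument gives only $\min_{t\le T}\phi^2(S_t)\le O\bigl(\tfrac1T\log\mu(S_T)\bigr)=O(\phi\log\gamma)$, and in the LS recursion the prefactor $\sqrt{x/\mu(v)}\le\sqrt{\gamma^{1+\eps}}$ multiplying $(1-\Phi^2/2)^t$ is present no matter where you truncate the curve --- truncation at volume $\Gamma$ only controls the additive $x/\Gamma$ term --- so beating it forces $t\Phi^2=\Omega(\log\gamma)$. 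The extra room $\gamma^{1+\eps}$ does not remove this; it is spent on the volume bound, not the conductance bound. The paper's whole point is that one must take $T=\Theta(\eps\log\gamma/\phi)$. At that horizon the trapping probability is no longer constant but only about $\gamma^{-\eps}$, and this is precisely where the new escape bound $(1-\phi/2)^T$ is indispensable (the old bound $1-T\phi/2$ is vacuous for $T>2/\phi$; at $T=\Theta(1/\phi)$, where you invoke the new bound, the old one already suffices). Three consequences missing from your proposal follow: (i) one must run $\gamma^{\Theta(\eps)}$ independent copies of the process to recover constant success probability, so ``$O(1)$ further restarts'' is wrong; (ii) the Andersen--Peres conductance lemma must be strengthened from a constant-probability statement to one holding with probability $1-1/\alpha$ at the cost of only an additive $\ln\alpha$ (the paper's Lemma~5.2), since it has to survive a union bound against an event of probability only $\gamma^{-\eps}$; (iii) your Markov-inequality construction of the core $U'$ breaks down, because at the correct horizon $\EE{u\sim\pi_U}{\sigma(u)}\approx\gamma^{-\eps}$ and Markov applied to $1-\sigma$ gives nothing --- the paper needs a separate spectral argument (Lemma~3.3) to show that half of $U$ by volume satisfies $\rem(v,T,U)\gtrsim(1-3\phi(U)/2)^T$ pointwise, and explicitly remarks that Markov does not suffice here.

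Your fallback ``eigenspace'' paragraph is not a proof either: the inequality $\tp{\b1_U}\G\b1_U\ge(1-\phi/2)\mu(U)$ does not by itself place $\b1_U$ near the span of eigenvectors with eigenvalue above $1-\phi/(2\eps)$ with the quantitative volume control you assert. The secondary deviations --- truncating the ESP by discarding mass (the paper uses the exact Andersen--Peres simulator, whose expected work per output volume is $O(T^{1/2}\log^{3/2}n)$, which with $T=\Theta(\eps\log\gamma/\phi)$ and $\gamma^{\eps/2}$ copies gives the claimed $O(\gamma^{\eps}\phi^{-1/2}\log^2 n)$), and the work accounting via ``output volume $\Omega(\gamma)$'', which need not hold --- are repairable; the choice of horizon is not, within your framework.
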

We remark that unlike the previous local graph clustering algorithms, the running time of the algorithm is slightly super linear in the size of the optimum.

As a byproduct of the above result we give an approximation algorithm for  the  expansion profile of $G$.   Lovasz and Kannan \cite{LK99} defined the {\em expansion profile} of a graph $G$ as follows:
$$ \phi(\gamma) := \min_{S:\mu(S)\leq \gamma} \phi(S).$$
Lovasz and Kannan used expansion profile as a parameter to prove strong upper-bounds on the mixing time of random walks. The notion of expansion profile recently received significant attention in the literature
because of its close connection to the small set expansion problem and the unique games conjecture \cite{RS10}.
Raghavendra, Steurer, Tetali \cite{RST10}, and Bansal et al.~\cite{BFKM11}
use semidefinite programming and designed  algorithms that approximate $\phi(\gamma)$ within $O(\sqrt{\phi(\gamma)^{-1}\log\frac{\mu(V)}{\gamma}})$,
and $O(\sqrt{\log{n}\log\frac{\mu(V)}{\gamma}})$ of the optimum, respectively. However,
in the interesting regime of $\gamma=o(\mu(V))$, which is of interests to the small set expansion problem, the quality of both  approximation algorithms is  not independent of $\gamma$.

% However, for all these techniques there is a poly logarithmic loss
%in approximating the expansion.

Here, we prove $\gamma$ independent approximation of $\phi(\gamma)$ as a function  of $\phi(\gamma^{1-\eps})$, without any dependency in the size of the graph; specifically we prove the following theorem:
\begin{restatable}{theorem}{mainsse}
\label{thm:sse}
There is a polynomial time algorithm that takes as input a target conductance $\phi$, and $0<\eps<1/4$, and outputs a set $S$, s.t. if $\phi(U)\leq \phi$, for $U\subseteq V$, then
 $\mu(S)\leq 2\mu(U)^{1+\eps}$, and  $\phi(S) \leq \sqrt{2\phi/\eps}$.
\end{restatable}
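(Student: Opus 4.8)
The plan is to derive this global statement from the local algorithm of Theorem~\ref{thm:ssefast} by brute-force enumeration over starting vertices. Given the target conductance $\phi$ and $\eps \in (0,1/4)$, we do not know the optimal set $U$, nor its volume $\mu(U)$, nor a vertex inside it — but we can afford to try every possibility in polynomial time. First I would guess the volume of $U$: since $\mu(U)$ is an integer between $1$ and $\mu(V)$, it suffices to run the procedure for each candidate value $\gamma \in \{1,2,\ldots,\mu(V)\}$ (or, to keep things clean, each power of $2$ in that range, absorbing the loss into constants). For each such $\gamma$, I would run $\Evo(v,\gamma,\phi,\eps)$ from every vertex $v \in V$, and amplify the constant success probability to, say, $1-1/n^2$ by repeating $O(\log n)$ times per vertex. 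Among all sets produced over all guesses of $\gamma$ and all starting vertices, I would output the one with the smallest conductance that also satisfies the volume bound $\mu(S) \le 2\gamma^{1+\eps}$ for the $\gamma$ under which it was produced.

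The correctness argument is then a direct invocation of Theorem~\ref{thm:ssefast}. Fix the (unknown) optimal set $U$ with $\phi(U) \le \phi$, and let $\gamma^\star$ be the guess with $\gamma^\star/2 < \mu(U) \le \gamma^\star$ (if we enumerate powers of two), so in particular $\mu(U) \le \gamma^\star$ and $\gamma^\star < 2\mu(U)$. By Theorem~\ref{thm:ssefast} there is a subset $U' \subseteq U$ with $\mu(U') \ge \mu(U)/2 \ge 1$, hence $U'$ is nonempty and contains at least one vertex $v^\star$; running the algorithm from $v^\star$ with parameters $(\gamma^\star,\phi,\eps)$ yields, with the boosted probability, a set $S$ with $\phi(S) = O(\sqrt{\phi/\eps})$ and $\mu(S) \le O(\gamma^{\star\,1+\eps}) \le O((2\mu(U))^{1+\eps}) = O(\mu(U)^{1+\eps})$. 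Since we try every vertex, we are guaranteed to hit $v^\star$; a union bound over the polynomially many runs shows the whole procedure succeeds with high probability. The output set therefore has conductance $O(\sqrt{\phi/\eps})$ and volume $O(\mu(U)^{1+\eps})$. To get the precise constants $\sqrt{2\phi/\eps}$ and $2\mu(U)^{1+\eps}$ claimed in the statement, one reads them off from the proof of Theorem~\ref{thm:ssefast} rather than from its big-$O$ summary, and one is slightly more careful with the volume-guessing granularity — e.g.\ enumerating $\gamma$ over all integers, or over a finer geometric grid with ratio $1+o(1)$, so that $\gamma^\star$ can be taken with $\gamma^\star \le \mu(U)(1+o(1))$ and the resulting bound is $\le 2\mu(U)^{1+\eps}$.

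The running time is polynomial: there are $O(\log \mu(V)) = O(\log n)$ (or $O(\mu(V))$, still polynomial) guesses for $\gamma$, times $n$ starting vertices, times $O(\log n)$ repetitions for amplification, and each individual run of $\Evo$ has expected work bounded by its work-per-volume ratio $O(\gamma^\eps \phi^{-1/2}\log^2 n)$ times the volume of its output, which is at most $\mu(V) = \mathrm{poly}(n)$. Standard Markov-inequality truncation turns the expected-work bound into a worst-case bound at the cost of a constant factor and a slightly reduced success probability, which the amplification already absorbs.

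The main obstacle is essentially bookkeeping rather than a genuine mathematical difficulty: the work lies in checking that the clean constants $\sqrt{2\phi/\eps}$ and $2\mu(U)^{1+\eps}$ in the statement can actually be extracted from the internals of the proof of Theorem~\ref{thm:ssefast} (whose advertised guarantee is only stated up to constants), and in choosing the volume-guessing grid fine enough that the extra factor from $\gamma^\star \ge \mu(U)$ does not push the exponent past $1+\eps$ or the leading constant past $2$. There is also a minor subtlety that $\eps$ is required to lie in $(0,1/4)$ here whereas Theorem~\ref{thm:ssefast} allows $\eps \in (0,1)$; the stricter range is presumably what is needed to make the constants work out, and should be respected when plugging in. None of this requires a new idea: the algorithm is "run the local algorithm from everywhere and take the best output."
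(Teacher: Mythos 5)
Your reduction --- run $\Evo$ from every vertex, over a grid of volume guesses, amplify, and keep the best output --- is a legitimate and logically non-circular way to get \emph{a} polynomial-time bicriteria approximation, and the paper itself acknowledges this route in the introduction to \autoref{thm:sse}: ``one can also use $\Evo$ to approximate $\phi(\gamma)$ with slightly worse guarantees (up to constant factors).'' But that caveat is exactly where your proof breaks. You defer the constants to ``bookkeeping,'' asserting that $\sqrt{2\phi/\eps}$ and $2\mu(U)^{1+\eps}$ ``can be read off from the proof of Theorem~\ref{thm:ssefast}.'' They cannot. The guarantees actually delivered inside that proof (see \autoref{lem:espsuccessprob} and Algorithm~\ref{alg:ssefast}) are $\phi(S)\leq\sqrt{200(1-\ln c)\,\phi/\eps}$ and $\mu(S)\leq 2\gamma^{1+\eps/2}/c$, where $c$ is the small absolute constant inherited from the second part of \autoref{prop:escprob} (which in turn comes from the $1/200$ in \autoref{lem:twovectors}). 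The conductance constant is therefore in the thousands, not $\sqrt{2}$, and the losses are structural: they come from the union bound between the two events in \autoref{lem:espsuccessprob}, the factor $8$ in the martingale concentration bound of \autoref{lem:concentration} (via $\psi(S)\geq\phi(S)^2/8$), and the $\beta$-Markov argument in \autoref{lem:dfcoupling}. No refinement of the volume-guessing grid touches any of these. Similarly, the leading factor $2/c$ in the volume bound already exceeds $2$ before you account for the slack $\gamma^\star\geq\mu(U)$.

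The paper's actual proof avoids all of this by not going through the evolving set process at all. It runs the deterministic procedure $\SSE$: compute the lazy-walk distributions $\tp{\b1_v}\G^t$ for every start vertex and every $t\leq\eps\ln\mu(V)/\phi$, and return the smallest threshold set of conductance at most $\sqrt{2\phi/\eps}$. The lower bound on the walk's retention in $U$ comes from the \emph{first}, averaged part of \autoref{prop:escprob}, $\EE{v\sim\bpsi_U}{\rem(v,t,U)}\geq(1-\phi/2)^t\geq\gamma^{-\eps}$, which is clean --- no constant $c$ --- and the upper bound comes from the Lov\'asz--Simonovits curve (\autoref{lem:recursels}): if every small threshold set expanded better than $\Phi=\sqrt{2\phi/\eps}$, one would get $I(\bp,\gamma)\leq\gamma/\Gamma+\sqrt{\gamma}(1-\Phi^2/2)^t<\gamma^{-\eps}$, a contradiction. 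This is why the theorem can state exact constants while \autoref{thm:ssefast} is stated only up to $O(\cdot)$. To salvage your approach you would have to either weaken the theorem to $O(\sqrt{\phi/\eps})$ and $O(\mu(U)^{1+\eps})$, or redo the ESP analysis to eliminate the constant $c$ --- neither of which is bookkeeping.
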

Our theorem indicates that the hard instance of the small set expansion problem are those where $\phi(\gamma)\approx 1$, for  $\gamma\leq n^{1-\Omega(1)}$.

We remark that one can also use \Evo~to  approximate $\phi(\gamma)$  with slightly worse guarantees (up to constant factors), in sub linear time. Here, for the sake of clarity and simplicity of the arguments, we prove the theorem using random walks.
 Independent of our work, Kwok and Lau \cite{KL12} have obtained  a somewhat different proof of \autoref{thm:sse}.

Our analysis techniques also provide a simpler proof of the structural result of Arora, Barak, Steurer \cite{ABS10}, that can be applied to non-regular graphs.
Let $A$ be the adjacency matrix of $G$, and $D$ be the diagonal matrix of vertex degrees. Let the {\em threshold rank} of $G$, denoted by $\rank_{1-\eta}(D^{-1}A)$, be the number (with multiplicities) of eigenvalues $\lambda$ of $D^{-1}A$, satisfying $\lambda > 1-\eta$.

\begin{restatable}{theorem}{mainabs}
\label{thm:abs}
For any graph $G$, and $0<\eps\leq1$, if $\rank_{1-\eta}(D^{-1}A) \geq n^{(1+\eps)\eta/\phi}$, then there exists  a set $S\subseteq V$ of volume $\mu(S) \leq 4\mu(V)n^{-\eta/\phi}$
and $\phi(S)\leq \sqrt{2\phi/\eps}$. Such a set can be found by finding the smallest threshold set  of conductance $\sqrt{2\phi/\eps}$, among the rows of $(D^{-1}A)^t$, for $t=O(\log{n}/\phi)$.
 \end{restatable}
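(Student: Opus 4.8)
The plan is to reduce the statement to the main random-walk escape-probability tool of the paper together with a counting argument on eigenvalues. Recall the central lemma: for any set $S$, a lazy $t$-step walk started at a uniformly random (by degree) vertex of $S$ stays inside $S$ with probability at least $(1-\phi(S)/2)^t$. We will use the contrapositive form: if \emph{every} threshold set $T$ arising among the rows of $M^t := (D^{-1}A)^t$ (with $t = O(\log n/\phi)$) has conductance $\phi(T) > \sqrt{2\phi/\eps}$, then in particular every small-volume set is ``hard to stay inside,'' and this will force the threshold rank to be small, contradicting the hypothesis $\rank_{1-\eta}(M) \geq n^{(1+\eps)\eta/\phi}$.

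Concretely, I would argue as follows. Let $r = \rank_{1-\eta}(M)$, and let $P$ be the $D$-orthogonal projection onto the span of the eigenvectors of $M$ with eigenvalue $> 1-\eta$. Then $\tr(P) = r$, and for $t$ even we have $\langle \mathbf{1}_v, M^t \mathbf{1}_v\rangle_D \geq (1-\eta)^{2t}\,\langle \mathbf{1}_v, P\mathbf{1}_v\rangle_D$ in the appropriate degree-weighted inner product, so averaging over $v$ (weighted by degree) gives $\frac{1}{\mu(V)}\sum_v d(v)\, \langle \mathbf{1}_v, M^t\mathbf{1}_v\rangle_D \geq (1-\eta)^{2t} r/\mu(V) \geq (1-\eta)^{2t} n^{(1+\eps)\eta/\phi}/n$ (after normalizing volumes appropriately). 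On the other hand, the quantity $\langle \mathbf{1}_v, M^t \mathbf{1}_v\rangle_D$ is exactly the return/stay probability of the walk, which is controlled by the conductance of the threshold sets of the row $M^t(v,\cdot)$: if all such threshold sets of volume at most $4\mu(V) n^{-\eta/\phi}$ have conductance exceeding $\sqrt{2\phi/\eps}$, then by a Cheeger-type / localization argument (exactly of the kind used to prove the paper's escape lemma, run in reverse) the probability mass of the walk spreads out, so $\langle \mathbf{1}_v, M^t\mathbf{1}_v\rangle_D$ is small --- specifically $o\big((1-\eta)^{2t} n^{\eps\eta/\phi}/n \cdot \text{(stuff)}\big)$ once $t \approx \log n/\phi$. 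Choosing $t = c\log n/\phi$ so that $(1-\eta)^{2t} \approx n^{-2\eta/\phi \cdot O(1)}$ makes the two bounds incompatible unless some threshold set of the required small volume has conductance at most $\sqrt{2\phi/\eps}$.

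The main obstacle is the quantitative bookkeeping in the ``reverse'' direction: turning the hypothesis that all small threshold sets have large conductance into a genuine upper bound on the diagonal $\langle \mathbf{1}_v, M^t\mathbf{1}_v\rangle_D$, with the right powers of $n$. This requires being careful that (i) the walk is lazy enough that no parity/periodicity issue arises, (ii) the threshold-rounding is done level-set by level-set so that a single good level set of conductance $\le \sqrt{2\phi/\eps}$ is extracted (this is where the $\sqrt{2\phi/\eps}$, rather than $\phi$, enters, via the standard Cheeger square-root loss sharpened using the $\eps$ slack), and (iii) the volume bound $\mu(S) \le 4\mu(V) n^{-\eta/\phi}$ is tracked through the choice of threshold --- the factor $4$ and the exponent $\eta/\phi$ come precisely from comparing $(1-\eta)^{2t}$ against $n^{-\eta/\phi}$ at $t = \Theta(\log n/\phi)$. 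Once these are in place, the final sentence of the theorem --- that the set is found by scanning threshold sets of the rows of $M^t$ --- is immediate, since the argument is constructive: the contradiction is witnessed by an explicit row $v$ and an explicit level set, and $M^t$ with $t = O(\log n/\phi)$ can be computed (approximately, which suffices) in polynomial time.
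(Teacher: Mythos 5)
Your lower-bound half is essentially the paper's: the paper also extracts, from $\rank_{1-\eta}(D^{-1}A)\geq n^{(1+\eps)\eta/\phi}$, a vertex $u$ with large return probability $\tp{\b1_u}\G^t\b1_u \geq \max\{\frac{1}{2n},\frac{\mu(u)}{2\mu(V)}\}\, r(1-\eta/2)^t$ via the trace identity $\sum_v \tp{\b1_v}\G^t\b1_v = \tr(\G^t)\geq r(1-\eta/2)^t$ and a Markov-type averaging (note it works with the lazy walk $\G=\frac12(D^{-1}A+I)$, which converts $\rank_{1-\eta}(D^{-1}A)$ into $\rank_{1-\eta/2}(\G)$ and removes any parity issue; also note the two-sided $\max$ is actually needed later to kill both terms of the upper bound).

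The genuine gap is the upper-bound half, which you yourself flag as ``the main obstacle'' and propose to obtain by running the escape-probability lemma ``in reverse.'' That does not work: Proposition 3.1 is a \emph{lower} bound on the staying probability, proved by a positive-semidefiniteness/Chebyshev-sum argument for the restricted matrix $I_S\G I_S$, and it has no contrapositive that converts ``all small threshold sets of the rows of $\G^t$ have conductance $>\Phi$'' into an upper bound on the diagonal entries $\tp{\b1_u}\G^t\b1_u$. The tool that actually does this is the Lov\'asz--Simonovits curve: one defines $I(\bp,x)$ as in \eqref{eq:Idef} and shows (Lemma 4.2, built on Lemma 4.1) that if for all $t\leq T$ every threshold set of $\tp{\b1_u}\G^t$ of volume at most $\Gamma$ has conductance at least $\Phi$, then $I(\tp{\b1_u}\G^t,x)\leq x/\Gamma + \sqrt{x/\mu(u)}\,(1-\Phi^2/2)^t$. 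Evaluating at $x=\mu(u)$, $\Gamma=4\mu(V)n^{-\eta/\phi}$, $T=\eps\ln n/\phi$ gives $p(u)\leq \frac{\mu(u)}{4\mu(V)}n^{\eta/\phi}+\frac1n$, which contradicts the trace lower bound since $n^{\eta/\phi}>4$. Your proposal contains no substitute for this inductive curve argument, and the quantitative content of the theorem (the interplay between $\Gamma$, $\Phi=\sqrt{2\phi/\eps}$, and $T$) lives entirely in that lemma; without it the proof is incomplete.
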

We remark that Arora et al.~\cite{ABS10} prove a variant of the above theorem for regular graphs with the stronger assumption that $\rank_{1-\eta}(D^{-1}A) \geq n^{100 \eta/\phi}$. This essentially resolves their question of whether the factor $100$ can be improved to $1+\eps$. Independent of our work, O'Donnell and Witmer \cite{OW12} obtained a different proof of the above theorem.

% Independent of our work Kwok and Lau \cite{KL12} have obtained almost a different proof of \autoref{thm:sse}, and proved that the non-expanding can be computed in sub-linear time.
%Also, independently, O'donnel and Wirk \cite{OW12} obtained a different proof of
 \subsection{Techniques}

Our main technical result is that if $S$ is a set of vertices, and we consider a $t$-step {\em lazy} random walk
started at a random element of $S$, then the probability that the walk is entirely contained in $S$
is at least $(1-\phi(S)/2)^t$. Previously, only the lower bound $1-t\phi(S)/2$ was known, and the
analysis of other local clustering algorithms implicitly or explicitly depended on such a bound.

For comparison, when $t=1/\phi(S)$, the known bound would imply that the walk has probability
at least $1/2$ of being entirely contained in $S$, with no guarantee being available in the case
$t=2/\phi(S)$, while our bound implies that for $t= (\alpha \ln n)/\phi$ the probability  of being
entirely contained in $S$ is still at least $1/n^\alpha$. Roughly speaking, the $\Omega(\log n)$ factor that we gain in the length of walks that we can study corresponds to our improvement in the expansion bound, while the $1/n^\alpha$ factor that we lose in the probability corresponds to the factor that
we lose in the size of the non-expanding set.
We also use this bound to prove stronger lower bounds on the uniform mixing time of reversible markov chains.

%Our polynomial time algorithm to approximate the expansion profile of a graph is the same
%as the algorithm used by Arora, Barak and Steurer \cite{ABS10} to find small non-expanding
%sets in graphs of a given threshold rank, but our analysis is different. (Our analysis can be also
%be used to give a different proof of their result.) Arora, Barak and Steurer use the fact
%that the probability that a random walk started at a random element of $S$ is still in $S$
%after $t$ steps is at most $(1-\phi^2(S)/2)^t$ (a fact that is true only in regular graphs), and is at least a certain expression which is
%large if the threshold rank is large, implying that not all sets of a certain size can have
%large conductance. In our analysis, we use the fact that the probability is at least
%$(1-\phi(S))^t$, and that it is at most a certain expression that depends on the expansion
%of the set of highest-probability vertices at various steps of the random walk, implying that
%if the conductance of $S$ is small, then the conductance of one such sets (which can be easily
%enumerated) must also be small.
Our polynomial time algorithm to approximate the expansion profile of a graph is the same
as the algorithm used by Arora, Barak and Steurer \cite{ABS10} to find small non-expanding
sets in graphs of a given threshold rank, but our analysis is different. (Our analysis can be also
be used to give a different proof of their result.) Arora, Barak and Steurer use the threshold rank to argue that a random walk started from a random vertex of $G$ will be at the initial vertex after $t$ steps with probability at least $\rank_{1-\eta}(G) (1-\eta)^t/n$. Then, they argue that if   all sets of a certain size have large conductance, this probability must be small, which is a contradiction. To make this quantitative they use the second norm of the probability distribution vector ($\norm{\bp_t}$)  as a potential function, where $\bp_t$ is the distribution of the walk after $t$ steps, and they choose $t$  to get a sufficiently small potential function and argue that the probability of being at the initial vertex after $t$ steps must be small. In our analysis, we use the fact that for any set $S$, there is a vertex $v\in S$ such that the probability that the walk started at $v$  remains in $S$ is  at least
$(1-\phi(S)/2)^t$. Then, we use the potential function $I(\bp_t,\gamma)$ introduced in the work of Lovasz and Simonovits \cite{LS90}. Roughly speaking, $I(\bp_t,\gamma)$ is defined as follows: consider the distribution $\bp_t$ of the vertex reached in a $t$-step random walk started from a random element of
 $S$, take the  $k$ vertices of highest probability under $\bp_t$, where $k$ is chosen so that their
 total volume is about $\gamma$, then $I(\bp_t,\gamma)$ is the total probability under $\bp_t$
 of those $k$ vertices.

Using the machinery of Lovasz and Simonovits, we can upper-bound  $I(\bp_t,\gamma)$  by $\frac{\gamma}{\Gamma} + \sqrt{\gamma}(1-\phi^2/2)^t$ conditioned on all of the threshold sets of volume at most $\Gamma$ of the probability distribution vectors up to time $t$ having conductance less than $\phi$.
Letting $t=\Omega(\alpha\log\mu(S)/\phi(S))$,  $\Gamma = O(\mu(S)^{1+\alpha})$, and $\phi=O(\sqrt{\phi(S)/\alpha})$ since the walk remains in $S$ with probability $\mu(S)^{-\alpha} < I(\bp_t,\gamma)$, at least one of the threshold sets of volume at most $O(\mu(S)^{1+\alpha})$, must have  conductance $O(\sqrt{\phi(S)/\alpha})$.
%Then we argue that if all of  and that it is at most a certain expression that depends on the expansion
%of the set of highest-probability vertices at various steps of the random walk, implying that
%if the conductance of $S$ is small, then the conductance of one such sets (which can be easily
%enumerated) must also be small.

Our local algorithm uses the evolving set process. The evolving set process starts with a vertex $v$
of the graph, and then produces a sequence of sets $S_1,S_2,\ldots, S_\tau$, with the property that
at least one set $S_t$ is such that $\partial S_t/\mu(S_t) \leq O(\sqrt {\log \mu(S_\tau) /\tau})$. %If one can show that the
%process is likely to construct sets all of volume at most $(1-\Omega(1)) \cdot \mu( |V|)$ up to some time $t^*$,
%then one of the sets will give a cut of conductance  at most $O(\sqrt {\log \mu(V)/t^*})$, and
If one can show
that up to some time $T$ the process constructs sets all of volume at most $\gamma$, then
we get a set of volume at most $\gamma$ and conductance at most $O(\sqrt {\log\gamma/T})$. Andersen
and Peres were able to show that if the graph has a set  $S$ of conductance $\phi$, then the process is likely
to construct sets all of volume at most $2 \mu(S)$ for at least
$T=\Omega(1/\phi)$ steps, if started from a random element of the $S$, leading to
their $O(\sqrt{ \phi \log n})$ guarantee. We show that for any chosen $\alpha<1/2$, the process will construct sets of volume at
most $O(\mu(S)^{1+\alpha})$ for $T=\Omega( \alpha \log \mu(S) /\phi)$ steps, with probability
at least $1/\mu(S)^{\alpha}$. This is enough to guarantee that, at least with probability
$1/\mu(S)^\alpha$, the process constructs at least one set of conductance $O(\sqrt { \phi/\alpha } )$.
To obtain this conclusion, we also need to strengthen the first part of the analysis of
Andersen and Peres: they show that the process has at least a constant probability
of constructing a set of low conductance in the first $t$ steps, while we need to show
that this happens with probability at least $1-1/\mu(S)^{\Omega(1)}$, because we need to take
a union bound with the event that $t$ is large, for which probability we only have a $\mu(S)^{-\Omega(1)}$
lower bound.
Finally, to achieve a constant probability of success, we run $\mu(S)^\alpha$ copies of the evolving set process simultaneously, and stop as soon as one of the copies finds a small non-expanding set.

\section{Preliminaries}

\subsection{Notations}
Let $G=(V,E)$ be an undirected graph, with $n:=|V|$ vertices and $m:=|E|$ edges.
Let $A$ be the adjacency matrix of $G$, $D$ be the diagonal matrix of vertex degrees, and $d(v)$ be the degree of vertex $v\in V$.
The {\em volume} of a subset $S\subseteq V$ is defined as the summation
of the degree of vertices in $S$,
$$\mu(S):=\sum_{v\in S} d(v).$$
Let $E(S,V\setminus S):=\left\{\{u,v\}: u\in S, v\notin S\right\} $ be the set of the edges connecting $S$ to $V\setminus S$, and we use $\partial(S)$ to denote
the number of those edges, we also let $E(S):=\left\{\{u,v\}: u,v\in S\right\}$ be the set of edges
inside $S$.  The {\em conductance} of a set $S\subseteq V$ is defined to be
%$$ \phi(S):=\frac{\partial(S)}{ \min(\mu(S), \mu(V\setminus S))}.$$
%It will be helpful to  use a very similar notion; The  {modified conductance} of a set $S\subseteq V$ is defined to be
$$ \phi(S) := \partial(S)/\mu(S).$$
Observe that $\phi(V)=0$. In the literature, the conductance of a set is sometimes defined to be
$\partial(S)/\min(\mu(S),\mu(V\setminus S))$. Notice that the quantities
are within a constant factor of each other if $\mu(S)=O(\mu(V\setminus S))$. Since, here we
are interested in finding {\em small} non-expanding sets, we would rather work with the above definition.
%Observe that the quantities are within a constant factor of each other, if $\mu(S) \leq (1-\eps)\mu(V)$ for some constant $\eps>0$.

We define the following  probability distribution vector on a set $S\subseteq V$ of vertices:
\begin{eqnarray*}
\pi_S(v) := \begin{cases}
d(v)/\mu(S) & \textrm{if $v\in S$,}\\
0 & \textrm{otherwise.}
\end{cases}
\end{eqnarray*}
In particular, we use $\pi(v) \equiv \pi_V(v)$ as the stationary distribution of a random  walk in $G$.

Throughout the paper, let $I$ be the identity matrix, and
for any subset $S\subseteq V$, let $I_S$ be the diagonal matrix such that $I_S(v,v)=1$, if $v\in S$ and $0$ otherwise.
Also, let $\b1$ be all one vector, and $\b1_S$ be the indicator vector of the set $S$.  We may abuse the notation and use $\b1_v$
instead of $\b1_{\{v\}}$ for a vertex $v\in V$.

We use lower bold letters to denote the vectors, and capital letters for the matrices/sets.
For a vector $\bx: V\rightarrow R$, and a set $S\subseteq V$, we use $x(S):=\sum_{v\in S} x(v)$. Unless otherwise specified, $\bx$ is considered to be a column vector, and $\bx'$ is its transpose. %We use $\supp(\bx):=|\{v: x(v)\neq 0\}|$ to denote the number of non-zero values in $\bx$.

For a square matrix $A$, we use $\lambda_{\min}(A)$ to denote the minimum eigenvalue of $A$, and $\lambda_{\max}(A)$ to denote  the maximum eigenvalue of $A$.

\subsection{Random Walks}
We will consider the {\em lazy random walk} on $G$ that each time step stays at the current vertex with probability $1/2$ and otherwise moves to the endpoint of a random edge attached
to the current vertex. We abuse notation, and we use $\G:=(D^{-1}A + I )/2$ as the transition probability matrix of this random walk, and $\pi(.)$ is the unique stationary distribution, that is $\tp{\bpsi}\G=\tp{\bpsi}$.
We write $\cPP{v}{.}$ to denote the probability measure of the lazy random walk started from a vertex $v\in V$.

Let $X_t$ be the random variable indicating the $t^{th}$ step of the random walk started at $v$. Observe that
the distribution of $X_t$ is exactly, $\tp{\b1_{v}} \G^t$.
For a subset $S\subseteq V$, and $v\in V$, and integer $t>0$, we write $\esc(v, t, S):=\cPP{v}{\cup_{i=0}^t X_i\notin S}$ to denote  the probability that the random walk
started at $v$ leaves $S$ in the first $t$ steps, and $\rem(v,t,S):=1-\esc(v,t,S)$ as the probability that the walk stays entirely inside $S$. It follows that,
\begin{equation}
\label{eq:remprob} \rem(v,t,S) := \tp{\b1_{v}} (I_S \G I_S)^t \b1_S.
\end{equation}

\subsection{Spectral Properties of the Transition Probability Matrix}
Although the transition probability matrix, $\G$, is not a symmetric
matrix, it features many properties of the symmetric matrices. First
of all, $\G$  can be transformed to a symmetric matrix simply by
considering $D^{1/2} \G D^{-1/2}$. It follows that any  eigenvector
of $D^{1/2}\G D^{-1/2}$ can be transformed into a left (right)
eigenvector of $\G$, once it is multiplied by $D^{1/2}$
($D^{-1/2}$), respectively. Henceforth, the left and right
eigenvalues of $\G$ are the same, and they are real.

Furthermore,  since  $ \norm{D^{-1} A}_\infty\leq 1$, and  $\G$ is the average of $D^{-1} A$,
and the identity matrix, we must have $\lambda_{\min}(\G)\geq 0$ and $\lambda_{\max}(\G) \leq 1$.
Thus  $D^{1/2}\G D^{-1/2}$ is a positive semidefinite matrix, symmetric matrix,
whose largest eigenvalue is at most 1.

\subsection{The Evolving Set Process}
The {\em evolving set process} is a markov chain on the subsets of the vertex set $V$. The process together with the closely related {\em volume biased evolving set process} is introduced in the work of Diaconis and Fill \cite{DF90} as
the {\em strong stationary dual} of a random walk. Morris and Peres \cite{MP03} use it to upper-bound the mixing time of random walks in terms of isoperimetric properties.

Given a subset $S_0\subseteq V$, the next subset $S_{1}$ is chosen as
follows: first we choose a threshold $R\in [0,1]$ uniformly at random. Then, we let
$$ S_{1}:= \{ u: \cPP{u}{X_1\in S_0} \geq R\}. $$
The transition kernel of the evolving set process is defined as, $\textup{K}(S,S') = \P{S_1=S' | S_0=S}$.
It follows that $\emptyset$, and $V$ are the absorbing states of this markov chain, and the rest of the states are transient.
%For a starting state $S_0\subseteq V$, we write $\PP{S_0}{.}:=\P{. | S_0}$ to denote the probability measure of the evolving set markov chain started at $S_0$.
%Similarly, we write $\EE{S_0}{.}:=\E{.|S_0}$ for the expectation.
Morris and Peres \cite{MP03}
defined the {\em growth gauge} $\psi(S)$ of a set $S$ as follows:
$$ \psi(S):=1-\E{\sqrt{\frac{\mu(S_1)}{\mu(S)}}~\middle|~ S_0=S}.$$
They showed that
\begin{proposition}[{Morris,Peres \cite{MP03}}]
\label{prop:growthgauge}
For any set $S\subseteq V$, $\psi(S)\geq \phi(S)^2/8$.
\end{proposition}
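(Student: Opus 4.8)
The plan is to reduce the statement to a one‑variable estimate for the random ratio $\mu(S_1)/\mu(S)$ (with $S_0=S$ fixed), driven by two facts: the mean of $\mu(S_1)$ and the amount of volume that crosses the boundary of $S$ in one step. Write $p(u):=\cPP{u}{X_1\in S}=(\G\b1_S)(u)$ and, for a threshold value $R$, let $f(R):=\mu(\{u:p(u)\ge R\})/\mu(S)$, so that $\mu(S_1)=\mu(S)f(R)$ with $R$ uniform on $[0,1]$. The first step is the martingale identity $\E{\mu(S_1)}=\mu(S)$: since $R$ is uniform, $u$ lies in $S_1$ with probability exactly $p(u)$, hence $\E{\mu(S_1)}=\sum_u d(u)p(u)$, and reversibility, $d(u)\G(u,v)=d(v)\G(v,u)$, rewrites this as $\sum_{v\in S}d(v)\sum_u\G(v,u)=\mu(S)$. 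Equivalently $\int_0^1 f(R)\,dR=1$. Combining this with the algebraic identity $1-\sqrt{x}=\tfrac12\big((1-\sqrt x)^2+(1-x)\big)$ I get
\[ \psi(S)=\int_0^1\big(1-\sqrt{f(R)}\big)\,dR=\tfrac12\int_0^1\big(1-\sqrt{f(R)}\big)^2\,dR , \]
so what remains is to bound this ``variance of the square root'' from below.

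Here laziness enters. Since $\G(u,u)=1/2$, every $u\in S$ has $p(u)\ge 1/2$ and every $u\notin S$ has $p(u)\le 1/2$; therefore $S_1\supseteq S$ when $R\le 1/2$ and $S_1\subseteq S$ when $R>1/2$. I then compute the outward mass $\beta:=\int_0^{1/2}\mu(S_1\setminus S)\,dR=\sum_{u\notin S}d(u)\min(p(u),1/2)=\sum_{u\notin S}d(u)p(u)$, and reversibility once more turns this into $\sum_{v\in S}d(v)\,\cPP{v}{X_1\notin S}=\tfrac12\partial(S)=\tfrac12\phi(S)\mu(S)$; by the martingale identity the inward mass $\int_{1/2}^1\mu(S\setminus S_1)\,dR$ equals $\beta$ as well. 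Translating back to $f$, this says the average of $f$ over $[0,\tfrac12]$ is $1+\phi(S)$ and its average over $[\tfrac12,1]$ is $1-\phi(S)$.

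To finish, I use that $x\mapsto(1-\sqrt x)^2$ is convex on $[0,\infty)$ and apply Jensen's inequality \emph{separately} on $[0,\tfrac12]$ and on $[\tfrac12,1]$, obtaining
\[ \psi(S)\ \ge\ \tfrac14\Big(\big(1-\sqrt{1+\phi(S)}\big)^2+\big(1-\sqrt{1-\phi(S)}\big)^2\Big)\ =\ 1-\frac{\sqrt{1+\phi(S)}+\sqrt{1-\phi(S)}}{2}, \]
after which a routine elementary estimate — $\big(\tfrac{\sqrt{1+\phi}+\sqrt{1-\phi}}{2}\big)^2=\tfrac{1+\sqrt{1-\phi^2}}{2}\le 1-\tfrac{\phi^2}{4}$ and then $\sqrt{1-\phi^2/4}\le 1-\phi^2/8$ — gives $\psi(S)\ge\phi(S)^2/8$. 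The two points that need care are exactly the use of laziness to split the threshold at $1/2$ (this is what makes the conditional averages of $f$ on the two halves equal $1\pm\phi(S)$ rather than both equal to $1$), and the fact that Jensen must be applied on the two halves individually: applying it on all of $[0,1]$ would only recover the vacuous bound $\psi(S)\ge 0$, since $\int_0^1 f=1$. Identifying $\beta=\tfrac12\partial(S)$ via reversibility and the degree weighting is the other routine‑but‑fiddly ingredient; everything else is bookkeeping.
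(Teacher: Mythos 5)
The paper does not prove this proposition; it is imported verbatim from Morris and Peres \cite{MP03}, so there is no in-paper proof to compare against. Your argument is correct and is essentially the standard Morris--Peres proof: laziness forces $p(u)\ge 1/2$ on $S$ and $p(u)\le 1/2$ off $S$, so splitting the threshold interval at $1/2$ gives conditional means $1\pm\phi(S)$ for $\mu(S_1)/\mu(S)$ (the reversibility computation identifying the crossing mass with $\tfrac12\partial(S)$ is right), and Jensen applied separately on each half yields $\psi(S)\ge 1-\tfrac12\bigl(\sqrt{1+\phi}+\sqrt{1-\phi}\bigr)\ge\phi^2/8$. The only cosmetic point is that $\G(u,u)=\tfrac12+A(u,u)/2d(u)\ge\tfrac12$ rather than exactly $\tfrac12$ when self-loops are present, but your argument only uses the inequality, so nothing changes.
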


The {\em volume-biased evolving set process} is a special case of the evolving set process where the markov chain is conditioned to be absorbed in $V$.
In particular, the transition kernel is defined of a volume biased ESP as follows:
$$ {\bf K}(S,S') = \frac{\mu(S')}{\mu(S)}{\textup{K}}(S,S').$$
Given a state $S_0$, we write $\hPP{S_0}{.}:=\hP{.|S_0}$ to denote the probability measure of the volume biased ESP started at state $S_0$, and we use $\hEE{S_0}{.}:=\hE{.|S_0}$ for the expectation.

Andersen and Peres used the volume biased ESP as a local graph clustering algorithm \cite{AP09}. They show that for any non-expanding set $U$, if we run the volume biased ESP from a randomly chosen vertex of $U$, with a constant probability, there is a set in the sample path of expansion $O(\sqrt{\phi(U)\log n})$, and volume at most $2\mu(U)$.
As a part of their proof, they designed an efficient   simulation of the volume biased ESP,
called $\gensam$. They prove the following theorem,
%\begin{definition}
%\label{def:espcost}
%The $\cost$ of a sample path of the volume biased ESP, $(S_0,\ldots,S_t)$, is defined as
%$$\cost(S_0,\ldots,S_\tau):=\mu(S_0) + \sum_{t=1}^\tau \left(\mu(S_t\Delta S_{t-1}) + \partial(S_{t-1})\right),$$
%where $\Delta$ denotes the symmetric difference of the two sets.
%\end{definition}
\begin{theorem}[{Andersen, Peres \cite[Theorems 3,4]{AP09}}]
\label{lem:espsimulate}
There is an algorithm, $\gensam$, that simulates the volume biased ESP such that for any vertex $v\in V$, any sample path $(S_0=\{v\},\ldots,S_\tau)$, is generated  with probability
$\hPP{v}{S_0,\ldots,S_\tau}$.
Furthermore, for a stopping time $\tau$ that is bounded above by $T$,
let $W(\tau)$ be the time complexity of $\gensam$ if it is run up to time $\tau$.
Then, the expected work per volume ratio of the algorithm is
$$  \hEE{v}{\frac{W(\tau)}{\mu(S_\tau)}} = O(T^{1/2}\log^{3/2}\mu(V)).$$
\end{theorem}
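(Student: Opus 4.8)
Two things must be shown: that $\gensam$ produces each sample path with exactly the volume‑biased probability $\hPP{v}{S_0,\ldots,S_\tau}$, and that its expected work, amortized against $\mu(S_\tau)$, is $O(T^{1/2}\log^{3/2}\mu(V))$. I would establish the first by reducing a single step of the volume‑biased ESP to a single step of the lazy random walk, and the second by pairing a suitable data structure with the martingale structure of the process.

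\textbf{Simulating one step.} Fix the current set $S$, and for $u\in V$ write $p_u:=\cPP{u}{X_1\in S}=\tfrac12\I{u\in S}+\tfrac1{2d(u)}\bigl|\{\{u,x\}\in E:x\in S\}\bigr|$; this is nonzero only for $u$ in the closed neighborhood $N[S]$, which has volume at most $2\mu(S)$. For $R\in(0,1]$ let $\tilde S(R):=\{u:p_u\ge R\}$. A step of the plain ESP outputs $\tilde S(R)$ with $R$ uniform on $(0,1]$, while a step of the volume‑biased ESP outputs $\tilde S(R)$ with $R$ drawn from the density $f(R)=\mu(\tilde S(R))/\mu(S)$, which is a genuine density because, by reversibility of the lazy walk, $\int_0^1\mu(\tilde S(R))\,dR=\sum_u d(u)p_u=\sum_{x\in S}\sum_u d(x)\cPP{x}{X_1=u}=\sum_{x\in S}d(x)=\mu(S)$. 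The key point, which lets $\gensam$ avoid examining all of $N[S]$ just to sample the threshold, is that $R\sim f$ can be obtained as follows: draw $x$ from $\pi_S$, take one lazy step $x\to w$, and let $R$ be uniform on $(0,p_w]$. Using $\P{w}=\sum_{x\in S}\pi_S(x)\cPP{x}{X_1=w}=d(w)p_w/\mu(S)$ (reversibility again), the pair $(w,R)$ then has joint density $\tfrac{d(w)}{\mu(S)}\I{p_w\ge R}$ on $V\times(0,1]$, whose $R$‑marginal is exactly $f$; so one outputs $S':=\tilde S(R)$, which realizes the transition $\P{S_1=S'\mid S_0=S}={\bf K}(S,S')$. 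Multiplying these transitions along the path, and noting that the stopping decision is a function of the prefix already generated, gives that $S_0=\{v\},\ldots,S_\tau$ is produced with probability $\prod_{t<\tau}{\bf K}(S_t,S_{t+1})=\hPP{v}{S_0,\ldots,S_\tau}$.

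\textbf{Efficiency.} A naive implementation of one step costs $\Theta(\mu(S_t)\log\mu(V))$ — build all $p_u$ on $N[S_t]$ and sort them — which, summed over $\tau\le T$ steps and divided by $\mu(S_\tau)$, would give only $O(T\log\mu(V))$: since $1/\mu(S_t)$ is a bounded supermartingale under $\hPP{v}{\cdot}$ (from the definition of ${\bf K}$, $\hEE{v}{1/\mu(S_{t+1})\mid S_t}\le \tfrac1{\mu(S_t)}\sum_{S'}\textup{K}(S_t,S')=1/\mu(S_t)$), optional stopping yields $\hEE{v}{\mu(S_t)/\mu(S_\tau)\mid t\le\tau}\le1$, hence $\hEE{v}{\sum_{t<\tau}\mu(S_t)/\mu(S_\tau)}\le\hEE{v}{\tau}\le T$, and this is essentially tight in examples. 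To recover the $\sqrt T$ factor the per‑step work must be made sublinear in $\mu(S_t)$ on average. The idea is that $\gensam$ maintains the set together with the sorted list of the current values $\{p_u:u\in N[S_t]\}$ in a balanced search structure, so that $\tilde S(R)$ is a prefix retrieved in $O(\log\mu(V))$ time, and updates only the $p_u$'s affected by $S_t\triangle S_{t+1}$ — cost $O\bigl((1+\mu(N[S_t\triangle S_{t+1}]))\log\mu(V)\bigr)$, so that a ``stalled'' step with $S_{t+1}=S_t$ is almost free. One then has to show that the expected total movement $\hEE{v}{\sum_{t<\tau}\mu(N[S_t\triangle S_{t+1}])}$ is $O(\sqrt T\,\mu(S_\tau)\log^{1/2}\mu(V))$ after amortizing against $\mu(S_\tau)$; the guiding principle is that a long run of large movements forces the growth gauge, hence (via \autoref{prop:growthgauge}) the conductance, to be small on a set the algorithm has already output, which is exactly when $\tau$ fires — so the set must genuinely grow, and the $\sqrt T$ emerges from a Cauchy–Schwarz against $\tau\le T$ combined with the martingale identity above, with the extra $\sqrt{\log\mu(V)}$ coming from the variance/growth estimates.

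\textbf{Main obstacle.} The crux is this last estimate: bounding the expected total movement and turning it into the $O(T^{1/2}\log^{3/2}\mu(V))$ bound. This needs a probabilistic input — controlling $\hEE{v}{\mu(S_t\triangle S_{t+1})\mid S_t}$ by roughly $\sqrt{\mu(S_t)\,\psi(S_t)}$ and trading it against the growth of $\mu(S_t)$ that the stopping rule enforces — and a data‑structural input — arranging that the amortized maintenance cost of the representation really matches the movement, including steps that relocate a constant fraction of the set. Once these two are in hand, the one‑step reduction, the telescoping of the path law, and the remaining martingale bookkeeping are all routine.
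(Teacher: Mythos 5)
The statement you are proving is not proved in this paper at all: it is imported verbatim from Andersen and Peres \cite{AP09} (their Theorems 3 and 4), so there is no internal proof to compare against. Judged on its own terms, the first half of your proposal is correct and is essentially the Andersen--Peres simulation step: the identity $\P{w}=d(w)p_w/\mu(S)$ via reversibility, the resulting joint density $\tfrac{d(w)}{\mu(S)}\I{p_w\ge R}$, and the conclusion that drawing $x\sim\pi_S$, stepping to $w$, and taking $R$ uniform on $(0,p_w]$ yields $R$ with density $\mu(\tilde S(R))/\mu(S)$, hence one exact transition of the kernel ${\bf K}$. Telescoping along the path and the remark about stopping times are fine.

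The running-time half, however, has a genuine gap, and you have named it yourself: everything that makes the bound $O(T^{1/2}\log^{3/2}\mu(V))$ rather than $O(T\log\mu(V))$ is asserted rather than proved. Concretely, two claims are missing. First, the probabilistic estimate $\hEE{S}{\mu(S\triangle S_1)}=O\bigl(\mu(S)\sqrt{\psi(S)}\bigr)$ (your ``roughly $\sqrt{\mu(S_t)\psi(S_t)}$'' is also miswritten dimensionally); this is the quantitative link between movement and the growth gauge, and combined with Cauchy--Schwarz over $\tau\le T$ steps and the martingale bound $\sum_t\psi(S_t)=O(\log\mu(S_\tau))$ it is exactly where the $\sqrt{T}$ and one factor of $\sqrt{\log}$ come from. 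Second, the data-structural claim that each step can be implemented in time proportional to $\mu(S_t\triangle S_{t+1})$ (up to $\log$ factors) including retrieval of the threshold set; this is nontrivial because a step can relocate a constant fraction of the set and because the structure must support sampling $x\sim\pi_{S_t}$ and the prefix query simultaneously. These two items are precisely the content of Theorems 3 and 4 of \cite{AP09}; without them the proposal establishes correctness of the sampler but only the naive $O(T\log\mu(V))$ work/volume bound, not the stated one. Also note a small inaccuracy in your ``guiding principle'': large relative movement does not force the conductance to be small and $\tau$ to fire; rather, large expected movement forces $\psi(S_t)$ to be large, and the martingale $M_t$ caps the total accumulated $\psi$, which is the opposite direction of the mechanism you describe.
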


\def\U{S}
\section{Upper Bounds on the Escaping Probability of Random Walks}
\label{sec:escprob}
In this section we establish strong results on the escaping probability of the random walks.
Spielman and Teng \cite{ST08} show that for any set $\U\subseteq V$, $t>0$, the random walk started at a randomly (proportional to degree) chosen vertex of $\U$, remain
in $\U$ for $t$ steps with probability at least $1-t\phi(\U)/2$.
We strengthen this result, by improving the lower bound to $(1-\phi(\U)/2)^t$,
\begin{proposition}
\label{prop:escprob}
For any set $\U\subseteq V$, and integer $t>0$,
\begin{equation}
\label{eq:remlowerbound}
 \EE{v\sim \bpsi_\U}{\rem(v,t,\U)} \geq  \left(1-\frac{\phi(\U)}{2}\right)\EE{v\sim \bpsi_\U}{\rem(v,t-1,\U)} \geq \ldots \geq \left(1-\frac{\phi(\U)}{2}\right)^t.
 \end{equation}
Furthermore, there is a subset $\U^t \subseteq \U$, such that $\mu(\U^t)\geq \mu(\U)/2$, and for all $v\in \U^t$
\begin{equation}
\label{eq:halfremlowerbound}
 \rem(v,t,\U)  \gtrsim \left(1-\frac{3\phi(\U)}{2}\right)^t.
 \end{equation}
\end{proposition}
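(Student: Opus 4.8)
The plan is to prove the two displays in sequence. For the first chain \eqref{eq:remlowerbound}, the natural object to track is the vector $\bp_{t} := \tp{\bpsi_{\U}}(I_{\U}\G I_{\U})^{t}$, restricted to $\U$; note that $\EE{v\sim\bpsi_{\U}}{\rem(v,t,\U)} = \bp_{t}(\U) = \tp{\bpsi_{\U}}(I_{\U}\G I_{\U})^{t}\b1_{\U}$ by \eqref{eq:remprob} and linearity. The key step is a \emph{one-step} inequality: I would show that for \emph{any} nonnegative vector $\bq$ supported on $\U$ with $\bq = \tp{\bpsi_{\U}}M$ for $M = (I_{\U}\G I_{\U})^{s}$ (so that $\bq$ inherits the crucial property that it is $\le$ the stationary vector $\bpsi_{\U}$ in an appropriate reweighted sense, or more simply that $\bq(v)/d(v)$ is a ``sub-distribution''), one has $(\tp{\bq}(I_{\U}\G I_{\U}))(\U) \ge (1-\phi(\U)/2)\,\bq(\U)$. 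Concretely, $\tp{\bq}(I_{\U}\G I_{\U})\b1_{\U}$ counts the probability mass that, after one lazy step from the distribution $\bq$, stays in $\U$; the mass that leaks out is $\tp{\bq}\G\b1_{V\setminus\U} = \frac12\tp{\bq}D^{-1}A\,\b1_{V\setminus\U}$, which is at most $\frac12\sum_{v\in\U}\frac{\bq(v)}{d(v)}\cdot(\text{edges from }v\text{ to }V\setminus\U)$. The starting point being the \emph{stationary} distribution $\bpsi_{\U}$ is what makes this work: along the walk, the ratio $\bp_{t}(v)/d(v)$ stays pointwise bounded by $1/\mu(\U)$ (since $I_{\U}\G I_{\U}$ applied to $\bpsi_{\U}$-type vectors is a sub-stochastic-in-reverse operation — reversibility of $\G$ with respect to $\pi$ is the clean way to see this), so the leaked mass after one further step is at most $\frac{1}{\mu(\U)}\cdot\frac12\partial(\U)\cdot(\text{appropriate count})$, giving exactly the factor $1-\phi(\U)/2$. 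Iterating $t$ times and using $\rem(v,0,\U)=1$ (so $\EE{v\sim\bpsi_{\U}}{\rem(v,0,\U)}=1$) yields \eqref{eq:remlowerbound}. The main subtlety here, and what I expect to be the crux, is verifying the pointwise bound $\bp_{t}(v)\le d(v)/\mu(\U)$ is preserved under $I_{\U}\G I_{\U}$; this should follow from reversibility ($\pi(u)\G(u,w)=\pi(w)\G(w,u)$, equivalently $d(u)\G(u,w)=d(w)\G(w,u)$) together with the fact that $\G$ is stochastic, but it must be written carefully.

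For \eqref{eq:halfremlowerbound}, I would apply a Markov-type averaging argument to a slightly perturbed quantity. Having the average $\EE{v\sim\bpsi_{\U}}{\rem(v,t,\U)}\ge(1-\phi(\U)/2)^{t}$ does not by itself give a pointwise lower bound on half the volume, so the trick is to compare $\rem(v,t,\U)$ against a \emph{larger} target. Observe that $1-\rem(v,t,\U)=\esc(v,t,\U)$, and by the union bound and reversibility $\EE{v\sim\bpsi_{\U}}{\esc(v,t,\U)}\le \sum_{i=1}^{t}\EE{v\sim\bpsi_{\U}}{\P{X_{i}\notin\U,\ X_{0},\dots,X_{i-1}\in\U}}\le t\phi(\U)/2$ — but more usefully, from \eqref{eq:remlowerbound} the \emph{complement} mass is $\EE{v\sim\bpsi_{\U}}{1-\rem(v,t,\U)}\le 1-(1-\phi(\U)/2)^{t}$. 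I would then set a threshold: the set $\U^{t}:=\{v\in\U:\rem(v,t,\U)\ge(1-\tfrac{3\phi(\U)}{2})^{t}\}$ (up to the constant hidden in $\gtrsim$). If $\mu(\U\setminus\U^{t})>\mu(\U)/2$, then averaging over $\bpsi_{\U}$ gives $\EE{v\sim\bpsi_{\U}}{1-\rem(v,t,\U)} > \frac12\bigl(1-(1-\tfrac{3\phi(\U)}{2})^{t}\bigr)$, which I would contradict against the upper bound $1-(1-\tfrac{\phi(\U)}{2})^{t}$ on that same expectation, \emph{provided} $1-(1-\phi(\U)/2)^{t} < \frac12(1-(1-3\phi(\U)/2)^{t})$. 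This inequality is where the constant $3/2$ (rather than something smaller) and the implicit constant in $\gtrsim$ come from; for it to hold for all $\phi(\U)\in(0,1)$ and all $t$ one needs to check the two regimes $t\phi(\U)$ small (Taylor: $1-(1-x)^{t}\approx tx$, and $t\phi/2$ vs $\frac12\cdot 3t\phi/2$ — comfortably true) and $t\phi(\U)$ large (both sides near $1$, need $1-e^{-t\phi/2}<\frac12$, i.e.\ $t\phi$ bounded), so the clean statement likely needs either the $\gtrsim$ slack or a restriction absorbed into that slack. I would phrase the final step so that the constant in $\gtrsim$ is whatever is needed to make the Markov inequality go through in all regimes.

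I expect the main obstacle to be the reversibility/monotonicity bookkeeping in the one-step inequality — specifically, pinning down the right invariant (pointwise domination $\bp_{t}\le\bpsi_{\U}\cdot\tfrac{\mu(\U)}{\text{?}}$ or an $\ell_{\infty}$ bound on $D^{-1}\bp_{t}$) that is both preserved by $I_{\U}\G I_{\U}$ and strong enough to bound the leakage by $\phi(\U)/2$ at every step rather than just on average. Everything after that (iteration, and the Markov argument for the pointwise version) is routine.
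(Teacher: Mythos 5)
Both halves of your plan contain a genuine gap, and in each case it is the step you yourself flagged as the crux.

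For \eqref{eq:remlowerbound}: the pointwise invariant $p_t(v)\le d(v)/\mu(\U)$ is true (reversibility plus substochasticity, as you say), but it only bounds the \emph{absolute} leakage at each step: $\tfrac12\sum_{v\in\U}\tfrac{p_t(v)}{d(v)}\,|E(\{v\},V\setminus\U)|\le \partial(\U)/2\mu(\U)=\phi(\U)/2$. Iterating $p_{t}(\U)\ge p_{t-1}(\U)-\phi(\U)/2$ gives $1-t\phi(\U)/2$, which is exactly the Spielman--Teng bound the proposition is improving on. To get the multiplicative recursion $p_t(\U)\ge(1-\phi(\U)/2)\,p_{t-1}(\U)$ you would need the \emph{relative} leakage to stay below $\phi(\U)/2$, i.e.\ $p_{t-1}(v)/d(v)\le p_{t-1}(\U)/\mu(\U)$ pointwise, and that invariant is not preserved by $I_\U\G I_\U$ (the surviving mass does not remain dominated by the degree distribution). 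The statement is nonetheless true, but the proof is spectral rather than combinatorial: writing $P=D^{1/2}I_\U\G I_\U D^{-1/2}$ (symmetric, PSD, $\lambda_{\max}\le1$), one has $p_t(\U)=\tp{\sqrt{\bpsi_\U}}P^t\sqrt{\bpsi_\U}=\sum_i c_i^2\lambda_i^t$ with $\sum_i c_i^2=1$, and the needed inequality $\sum_i c_i^2\lambda_i^t\ge\bigl(\sum_i c_i^2\lambda_i^{t-1}\bigr)\bigl(\sum_i c_i^2\lambda_i\bigr)$ is Chebyshev's sum (correlation) inequality for the increasing functions $\lambda\mapsto\lambda^{t-1}$ and $\lambda\mapsto\lambda$ on $[0,\infty)$. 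That log-convexity of the moment sequence of the eigenvalue distribution is the actual content of the first chain of inequalities; the leakage bookkeeping cannot deliver it.

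For \eqref{eq:halfremlowerbound}: the Markov-type averaging fails in precisely the regime the paper needs, namely $t\phi(\U)\gg1$. If more than half of $\mu(\U)$ had $\rem(v,t,\U)<c(1-3\phi(\U)/2)^t$, averaging only yields $\EE{v\sim\bpsi_\U}{\rem(v,t,\U)}<\tfrac12+\tfrac{c}{2}(1-3\phi(\U)/2)^t$, so a contradiction with the lower bound $(1-\phi(\U)/2)^t$ requires $(1-\phi(\U)/2)^t>1/2$, i.e.\ $t\phi(\U)=O(1)$. No choice of the constant hidden in $\gtrsim$ rescues this, and the intended applications take $t=\Theta(\eps\log\mu(\U)/\phi(\U))$. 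The paper explicitly remarks that the second statement does not follow from Markov applied to the first. Its actual argument is again spectral: for \emph{every} $X\subseteq\U$ with $\mu(X)\ge\mu(\U)/2$ one shows $\tp{\bpsi_X}(I_\U\G I_\U)^t\b1_X\ge\frac{1}{200}(1-3\phi(\U)/2)^t$ by splitting $\sqrt{\bpsi_\U}=\bx+\by$ with $\bx=I_X\sqrt{\bpsi_\U}$, restricting to the eigendirections of $P$ on which $\bx$ dominates $\by$, and applying Jensen's inequality to get $\tp{\bx}P^t\bx\ge\frac{1}{200}\bigl(3\tp{(\bx+\by)}P(\bx+\by)-2\bigr)^t$; since every such $X$ therefore contains a vertex with $\rem(v,t,\U)\gtrsim(1-3\phi(\U)/2)^t$, the set of good vertices must itself have volume at least $\mu(\U)/2$.
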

We remark
that the second statement does not follow from a simple application
of the Markov Inequality to the first statement, as this is the case
in \cite{ST08}. Whence, here both of the results incorporate
non-trivial spectral arguments.

As a corollary, we  prove strong lower bounds on the uniform mixing time of random walks in section \ref{sec:mixingtime}.
In the rest of this section we prove \autoref{prop:escprob}.
We start by proving \eqref{eq:remlowerbound}.

%\begin{proofof}{Equation \eqref{eq:remlowerbound}}
Using equation \eqref{eq:remprob}, and a simple induction on $t$,  \eqref{eq:remlowerbound} is equivalent to the following equation:
\begin{equation}
\label{eq:remonelevelbound} \tp{\bpsi_\U}(I_\U \G I_\U)^t\b1_\U \geq
(1-\phi(\U)/2) \tp{\bpsi}(I_\U \G I_\U)^{t-1}\b1_\U.
\end{equation}

Let $P := D^{1/2} I_\U \G I_\U D^{-1/2}$. First we show that
\eqref{eq:remonelevelbound} is equivalent to the  following
equation:
\begin{equation}
\label{eq:normalizedescape}
\tp{ \sqrt {\bpsi_\U}} P^t \sqrt{\bpsi_\U}  \geq \left(\tp{\sqrt{\bpsi_\U}} P \sqrt{\bpsi_\U}\right)\left(\tp{\sqrt{\bpsi_\U}} P^{t-1}\sqrt{\bpsi_\U}\right).
 \end{equation}
Then, we use  \autoref{lem:matrixnormmon} that shows
the above equation holds for any symmetric positive semidefinite matrix  $P$, and any norm one vector $\bx=\sqrt{\bpsi_\U}$.
First observe that by the definition of $P$, for any
$t>0$,
\begin{equation}
\label{eq:leftPsqrt} \tp{\bpsi_\U} (I_\U \G I_\U)^t \b1_\U =
\tp{\bpsi_\U} D^{-1/2} P^t D^{1/2} \b1_\U  = \tp{\sqrt{\bpsi_\U}}
P^t \sqrt{\bpsi_\U}
\end{equation}
On the other hand,
\begin{eqnarray} \tp{\bpsi_\U} (I_\U \G I_\U) \b1_\U &=& \frac12\tp{\bpsi_\U} (D^{-1}A + I) \b1_\U \nonumber\\
&=& \frac{1}{2\mu(\U)} (\tp{\b1_\U} A \b1_\U)  + \frac12 \tp{\bpsi_\U} \b1_\U\nonumber\\
&=& \frac{1}{2\mu(\U)} 2|E(\U)| + \frac12\nonumber\\
&=& 1-\phi(\U)/2 \label{eq:rightPsqrt}.
\end{eqnarray}
Equation \eqref{eq:normalizedescape} is derived simply  from
equation \eqref{eq:remonelevelbound}, by putting
\eqref{eq:leftPsqrt}, \eqref{eq:rightPsqrt} together. Next we prove
equation \eqref{eq:normalizedescape} using
\autoref{lem:matrixnormmon}. First observe that $\sqrt{\bpsi_\U}$ is
a norm one vector. On the other hand, by definition $P = \frac12(
D^{-1/2} I_\U A I_\U D^{-1/2} + D^{-1/2} I_\U D^{-1/2}) $ is a symmetric matrix.

It remains to show that $P$ is positive semidefinite. This follows
from the same reason that $D^{1/2} \G D^{-1/2}$ is positive semidefinite.
In particular, since eigenvectors of $P$ can be transformed to the eigenvectors of $I_\U \G I_\U$, %since any  eigenvector of $P$ can be a right (left) eigenvector of $\G I_\U$, once it is  multiplied by $D^{1/2}$ ($D^{-1/2}$), respectively.
 the  eigenvalues of $I_\U \G I_\U$ are the same as the eigenvalues of $P$.
Finally,  since  $\norm{I_\U D^{-1} A I_\U}_\infty \leq 1$, and
$I_\U \G I_\U$ is the average of $I_\U D^{-1}AI_\U$ and $I_\U$, we
must have $\lambda_{\min}(I_\U \G I_\U)\geq 0$, and
$\lambda_{max}(I_\U \G I_\U)\leq 1$. Thus $P$ is positive
semidefinite. Now, \eqref{eq:normalizedescape} simply follows from
\autoref{lem:matrixnormmon}. This completes the proof of
\eqref{eq:remlowerbound}

It remains to prove \eqref{eq:halfremlowerbound}.
We prove it by  showing that for any set $X\subseteq \U$, of volume $\mu(X)\geq \mu(\U)/2$, the random walk started at a randomly (proportional to degree) chosen vertex of $X$, remains in $X$ (and $\U$), with probability at least $\frac{1}{200} (1-3\phi(\U)/2)^t$,
\begin{equation}
\label{eq:Xremprob} \EE{v\sim \bpsi_X}{\rem(v,t,X)} =  \tp{\bpsi_{X}}(I_\U \G I_\U)^t\b1_X \geq
\frac{1}{200}\left(1-\frac{3\phi(\U)}{2}\right)^t.
\end{equation}
Therefore, in any such set $X$, there is a vertex that satisfy
\eqref{eq:halfremlowerbound}, hence the  volume of the set of
vertices that satisfy \eqref{eq:halfremlowerbound} is at least half
of $\mu(\U)$.

Using equations \eqref{eq:leftPsqrt} and \eqref{eq:rightPsqrt},
\eqref{eq:Xremprob}  is equivalent to the following equation,
\begin{equation}
\tp{\sqrt{\bpsi_X}} P^t \sqrt{\bpsi_X} \geq \frac{1}{200} \left(
3\tp{\sqrt{\bpsi_\U}} P \sqrt{\bpsi_\U} - 2\right)^t.
\end{equation}
We prove the above equation using \autoref{lem:twovectors}.
Let $Y=\U\setminus X$, and define
\begin{eqnarray}
\bx&:=&I_X\sqrt{\bpsi_\U} = \sqrt{\mu(X)\bpsi_X/\mu(\U)}\label{eq:xpsix}\\
\by&:=&I_Y\sqrt{\bpsi_\U} = \sqrt{\mu(Y)\bpsi_Y/\mu(\U)}\nonumber
\end{eqnarray}
Since $X\cap Y=\emptyset$, $\langle \bx,\by\rangle =0$, and $\norm{\bx+\by}=\norm{\sqrt{\bpsi_\U}}=1$. Furthermore, since $\mu(X) \geq \mu(\U)/2 \geq \mu(Y)$, $\norm{\bx}\geq \norm{\by}$.
Therefore, $P$, $\bx,\by$ satisfy the requirements of \autoref{lem:twovectors}.
Finally, since $ \tp{\sqrt{\bpsi_X}} P^t\sqrt{\bpsi_X} \geq \tp{\bx}P^t \bx$,  \eqref{eq:Xremprob} follows from \autoref{lem:twovectors}. %, thus \eqref{eq:halfremlowerbound}.
%\begin{eqnarray*}
%\tp{\bx} P^t \bx&\gtrsim& \left(3\sqrt{\tp{\bpsi_\U}} P \sqrt{\bpsi_\U}-2\right)^t \\
%&=& \left(3(1-\phi(\U)/2)-2\right)^t\\
%&=& (1-3\phi(\U)/2)^t,
%\end{eqnarray*}
%where the first equality follows from equation \eqref{eq:rightPsqrt}.
%By equations \eqref{eq:leftPsqrt} and \eqref{eq:xpsix} we have
%$$ \tp{\bpsi_X} (\G I_\U)^t \b1_X = \sqrt{\tp{\bpsi_X}} P^t \sqrt{\bpsi_X} \gtrsim (1-3\phi(\U)/2)^t. $$
%Finally, the corollary's statement follows from the fact that $\G I_\U$ entries are non-negative.
This completes the proof of Proposition \ref{prop:escprob}.

\begin{lemma}
\label{lem:matrixnormmon}
Let $P \in \mathbb{R}^{n\times n}$ be a symmetric positive semidefinite matrix.
Then, for any  $\bx\in \mathbb{R}^n$ of norm $\norm{x}=1$, and  integer $t> 0$,
$$ \tp{\bx} P^t \bx \geq \left(\tp{\bx} P^{t-1}\bx\right)\left(\tp{\bx} P \bx\right) \geq \ldots \geq \left(\tp{\bx} P\bx\right)^t.$$
\end{lemma}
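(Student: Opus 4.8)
The plan is to diagonalize $P$ and reduce the chain of inequalities to two claims: a one-step claim $\bx^\top P^t \bx \geq (\bx^\top P^{t-1}\bx)(\bx^\top P\bx)$ for every $t\geq 1$, and the observation that iterating the one-step claim downward gives $\bx^\top P^t\bx \geq (\bx^\top P\bx)^t$, using $\bx^\top P^0\bx = \norm{\bx}^2 = 1$ at the bottom of the recursion. So everything comes down to the one-step claim. Since $P$ is symmetric positive semidefinite, write $P = \sum_i \lambda_i \bv_i\bv_i^\top$ with $\lambda_i \geq 0$ and $\{\bv_i\}$ an orthonormal eigenbasis, and set $c_i := \langle \bx,\bv_i\rangle^2 \geq 0$. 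Then $\sum_i c_i = \norm{\bx}^2 = 1$, so $(c_i)$ is a probability distribution, and $\bx^\top P^k\bx = \sum_i c_i \lambda_i^k = \E{\Lambda^k}$ where $\Lambda$ is the random variable taking value $\lambda_i$ with probability $c_i$.

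In this language the one-step claim becomes $\E{\Lambda^t} \geq \E{\Lambda^{t-1}}\,\E{\Lambda}$, i.e. $\E{\Lambda^{t-1}\cdot\Lambda} \geq \E{\Lambda^{t-1}}\E{\Lambda}$. This is exactly the statement that the two random variables $\Lambda^{t-1}$ and $\Lambda$ are nonnegatively correlated, which holds because both are monotone (nondecreasing) functions of the single underlying random variable $\Lambda$ — a one-line consequence of the Chebyshev sum inequality / FKG-type fact: for any real random variable $Z$ and nondecreasing functions $f,g$, $\E{f(Z)g(Z)} \geq \E{f(Z)}\E{g(Z)}$, provable by expanding $\tfrac12\E{(f(Z)-f(Z'))(g(Z)-g(Z'))} \geq 0$ over an independent copy $Z'$. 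Here nonnegativity of the $\lambda_i$ is what guarantees $x\mapsto x^{t-1}$ is nondecreasing on the relevant range, so positive semidefiniteness of $P$ is used in an essential way. Then the full chain follows: $\bx^\top P^t\bx \geq (\bx^\top P\bx)(\bx^\top P^{t-1}\bx) \geq (\bx^\top P\bx)^2(\bx^\top P^{t-2}\bx) \geq \cdots \geq (\bx^\top P\bx)^t$.

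I do not expect a real obstacle here; the only thing to be careful about is making sure the correlation inequality is invoked with the correct monotonicity hypothesis (both powers are nondecreasing on $[0,\infty)$, which is where the eigenvalues live), and that the base case $t=1$ of the chain is the trivial identity $\bx^\top P\bx \geq \bx^\top P\bx$. One could alternatively give a direct "second moment" style proof by noting $\E{\Lambda^t}\E{\Lambda^{t-2}} \geq (\E{\Lambda^{t-1}})^2$ (Cauchy–Schwarz on $\Lambda^{(t-2)/2}$ and $\Lambda^{t/2}$) — a log-convexity of moments statement — but the monotone-correlation route is cleaner because it directly yields the stated product form rather than requiring an extra manipulation. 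If a fully elementary write-up is preferred, the correlation inequality can be replaced by the two-index computation $\E{\Lambda^{t-1}\Lambda} - \E{\Lambda^{t-1}}\E{\Lambda} = \tfrac12\sum_{i,j} c_ic_j(\lambda_i^{t-1}-\lambda_j^{t-1})(\lambda_i-\lambda_j) \geq 0$, each summand being a product of two quantities of the same sign.
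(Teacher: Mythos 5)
Your proposal is correct and follows essentially the same route as the paper's proof: diagonalize $P$, interpret $\langle \bx,\bv_i\rangle^2$ as a probability distribution over eigenvalues so that $\bx'P^k\bx = \E{\Lambda^k}$, and deduce the one-step inequality from Chebyshev's sum inequality using nonnegativity of the eigenvalues. No substantive differences.
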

\begin{proof}
Since all of the inequalities in lemma's statement follows from the
first inequality, we only prove the first inequality. Let $\bv_1,
\bv_2,\ldots,\bv_n$ be the set of orthonormal eigenvectors of $P$
with the corresponding eigenvalues $\lambda_1,\lambda_2, \ldots,
\lambda_n$. For any $k\geq 1$, we have
\begin{eqnarray}
\tp{\bx} P^k \bx &=& \left(\sum_{i=1}^n \langle \bx, \bv_i\rangle \bv_i \right) P^k \left( \sum_{i=1}^n \langle \bx, \bv_i\rangle \bv_i\right) \nonumber\\
&=& \left(\sum_{i=1}^n \langle \bx,  \bv_i\rangle \lambda_i^k \bv_i \right) \cdot \left( \sum_{i=1}^n \langle \bx, \bv_i\rangle \bv_i\right) \nonumber\\
&=& \sum_{i=1}^n \langle \bx,  \bv_i\rangle^2 \lambda_i^k. \label{eq:xpxexpanding}
\end{eqnarray}
On the other hand, since $\{\bv_1,\ldots,\bv_n\}$ is an orthornormal system, we have
$$ \sum_{i=1}^n \langle \bx, \bv_i\rangle^2=\norm{\bx}^2=1.$$
For any $k>0$, Let $f_k(\lambda) = \lambda^k$; it follows that,
$$\sum_i \langle \bx, \bv_i\rangle^2 \lambda_i^k = \EE{\Lambda\sim\cD}{f_k(\Lambda)},$$
where $\PP{\Lambda\sim\cD}{\Lambda=\lambda_i} = \langle \bx, \bv_i\rangle^2$. Using equation \eqref{eq:xpxexpanding} we may rewrite the lemma's statement as follows:
$$ \EE{\cD}{f_{t-1}(\Lambda)f_{1}(\Lambda)} \geq \EE{\cD}{f_{t-1}(\Lambda)} \EE{\cD}{f_1(\Lambda)}$$
Since $P$ is positive semidefinite, $\lambda_{\min}(P)\geq 0$. Thus, for all $t>0$, the function $f_t(.)$ is increasing  in the support of $\cD$.
The above inequality follows from the Chebyshev's sum inequality.
%\begin{eqnarray*} \tp{\bx} P^t \bx &=& \norm{\bx}^2 \sum_{i=1}^n \left(\frac{\langle \bx,  \bv_i\rangle}{\norm{\bx}}\right)^2 \lambda^t \bv_i \\
%&\geq& \norm{\bx}^2\left(\sum_{i=1}^n \left(\frac{\langle \bx,  \bv_i\rangle}{\norm{\bx}}\right)^2 \lambda \bv_i\right)^t  \\
%&=& \norm{\bx}^{2t-2}\left(\tp{\bx} P \bx\right)^t.
%\end{eqnarray*}
\end{proof}

\begin{lemma}
\label{lem:twovectors}
Let $P\in \mathbb{R}^{n\times n}$ be a symmetric positive semidefinite  matrix such that $\lambda_{\max}(P)\leq 1$, and $\bx,\by \in \mathbb{R}^n$ such that $\langle \bx,\by\rangle =0$, $\norm{\bx+\by}=1$,
and $\norm{\bx} \geq \norm{\by}$. Then, for any integer $t>0$,
$$ \tp{\bx} P^t \bx \geq \frac{1}{200} \left(3\tp{(\bx+\by)} P (\bx+\by)-2\right)^t$$
\end{lemma}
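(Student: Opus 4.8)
The plan is to pass to the eigenbasis of $P$ and reduce the statement to a single spectral‑truncation estimate, avoiding the monotonicity of \autoref{lem:matrixnormmon} entirely. Let $\bv_1,\dots,\bv_n$ be orthonormal eigenvectors of $P$ with eigenvalues $\lambda_1,\dots,\lambda_n\in[0,1]$ (by positive semidefiniteness and $\lambda_{\max}(P)\le 1$), and write $x_i:=\langle\bx,\bv_i\rangle$, $y_i:=\langle\by,\bv_i\rangle$. From $\langle\bx,\by\rangle=0$ and $\norm{\bx+\by}=1$ we get $\norm{\bx}^2+\norm{\by}^2=1$, and since $\norm{\bx}\ge\norm{\by}$ this forces $\norm{\by}^2\le 1/2$. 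Put $\mu:=\tp{(\bx+\by)}P(\bx+\by)=\sum_i\lambda_i(x_i+y_i)^2$; since $\lambda_i\le 1$ and $\sum_i(x_i+y_i)^2=\norm{\bx+\by}^2=1$, we have $\mu\le 1$, so $3\mu-2\le 1$. If $3\mu-2\le 0$ and $t$ is odd the right‑hand side is non‑positive while $\tp{\bx}P^t\bx\ge 0$ (each $\lambda_i^t\ge 0$), so the inequality is immediate; in the regime where the lemma is actually invoked one has $3\mu-2>0$, and I now treat this main case, $\mu\ge 2/3$.

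Set the threshold $c:=3\mu-2\in[0,1]$ and split the spectrum into $B:=\{i:\lambda_i\ge c\}$ and its complement. Using $\lambda_i\le 1$ for $i\in B$ and $\lambda_i<c$ for $i\notin B$,
\[
\mu=\sum_i\lambda_i(x_i+y_i)^2\;\le\; W_B+c\,(1-W_B),\qquad\text{where }W_B:=\sum_{i\in B}(x_i+y_i)^2 .
\]
If $\mu<1$ this rearranges to $W_B\ge\frac{\mu-c}{1-c}=\frac{2-2\mu}{3-3\mu}=\frac23$, and if $\mu=1$ then $\sum_i(1-\lambda_i)(x_i+y_i)^2=0$ forces $(x_i+y_i)^2=0$ whenever $\lambda_i<1$, so $W_B=1\ge\frac23$; in all cases $W_B\ge 2/3$. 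Let $\bx_B,\by_B$ be the orthogonal projections of $\bx,\by$ onto $\mathrm{span}\{\bv_i:i\in B\}$; since projection is linear, $\bx_B+\by_B$ is the projection of $\bx+\by$, so $\norm{\bx_B+\by_B}^2=W_B$, and the triangle inequality together with $\norm{\by_B}\le\norm{\by}\le 1/\sqrt2$ gives
\[
\norm{\bx_B}\;\ge\;\norm{\bx_B+\by_B}-\norm{\by_B}\;\ge\;\sqrt{\tfrac23}-\sqrt{\tfrac12}\;>\;0 .
\]
Hence $\sum_{i\in B}x_i^2=\norm{\bx_B}^2\ge\bigl(\sqrt{2/3}-\sqrt{1/2}\bigr)^2=\frac{7-4\sqrt3}{6}>\frac1{200}$. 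To finish, since $\lambda_i\ge 0$ for all $i$ and $\lambda_i\ge c\ge 0$ for $i\in B$,
\[
\tp{\bx}P^t\bx=\sum_i\lambda_i^t x_i^2\;\ge\;\sum_{i\in B}\lambda_i^t x_i^2\;\ge\;c^t\sum_{i\in B}x_i^2\;\ge\;\frac1{200}\,c^t=\frac1{200}\,(3\mu-2)^t ,
\]
which is exactly the claim.

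The main obstacle is choosing the right truncation level and the right way to lower‑bound how much of $\bx$ survives it. The two obvious routes both fail: bounding $\tp{\bx}P^t\bx\ge(\tp{\bx}P\bx)^t$ via \autoref{lem:matrixnormmon} is too weak, because the cross term $2\tp{\bx}P\by$ can account for almost all of $\mu$, so that $\tp{\bx}P\bx$ is much smaller than $3\mu-2$ and the bound dies for large $t$; and truncating at $c=3\mu-2$ but estimating the surviving mass of $\bx$ by the lossy inequality $(x_i+y_i)^2\le 2x_i^2+2y_i^2$ yields only a non‑positive lower bound on $\sum_{i\in B}x_i^2$. The resolution is to keep the truncation exactly at $c=3\mu-2$ — which is precisely the level at which $W_B\ge 2/3$ — and estimate $\norm{\bx_B}$ by the sharp triangle inequality $\norm{\bx_B}\ge\norm{\bx_B+\by_B}-\norm{\by_B}$, which is nearly tight precisely because $\by$ is short ($\norm{\by}^2\le 1/2$); the only thing to verify is that the two constants, $2/3$ from the truncation and $1/2$ from $\norm{\by}$, leave a positive gap, which they do with room to spare (the true constant is about $1/84$ rather than $1/200$).
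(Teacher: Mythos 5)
Your proof is correct, and it takes a genuinely different route from the paper's. The paper also passes to the eigenbasis, but it truncates by comparing coordinates of $\bx$ and $\by$: it keeps $B=\{i:|\langle\bx,\bv_i\rangle|\ge\alpha|\langle\by,\bv_i\rangle|\}$, on which $\langle\bx,\bv_i\rangle^2\ge\langle\bz,\bv_i\rangle^2/(1+1/\alpha)^2$ for $\bz=\bx+\by$; it then shows the retained $\bz$-mass is $L\ge(1-\alpha^2-2\alpha)/2$ using $\norm{\by}^2\le 1/2$, and applies Jensen's inequality to the distribution $\langle\bz,\bv_i\rangle^2/L$ on $B$ to turn $\sum_{i\in B}\langle\bz,\bv_i\rangle^2\lambda_i^t$ into $L$ times the $t$-th power of an average eigenvalue, which is at least $3\tp{\bz}P\bz-2$ after tuning $\alpha=0.154$. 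You instead truncate the spectrum at the level $c=3\mu-2$ itself, obtain retained $\bz$-mass $W_B\ge 2/3$ by a Markov-type argument (the identity $\frac{\mu-c}{1-c}=\frac23$ is the nice observation here), transfer this to $\bx$-mass via the triangle inequality for projections together with $\norm{\by}\le 1/\sqrt2$, and then need no Jensen at all since $\lambda_i^t\ge c^t$ pointwise on $B$. Your version is more elementary (no parameter to optimize, no convexity argument) and gives a slightly better constant, about $1/84$. One shared caveat: for even $t$ with $3\tp{\bz}P\bz-2<0$ the inequality as stated is actually false (take $P=0$ and $t=2$), so your restriction to $3\mu-2\ge 0$ is not a gap in your argument but a defect of the lemma's statement; the paper's proof silently breaks at the same point, since its lower bound on the average eigenvalue can be negative and $\lambda\mapsto\lambda^t$ is not monotone there for even $t$, and the lemma is only ever applied in the regime $3\tp{\bz}P\bz-2>0$.
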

\begin{proof}
Let $\bz:=\bx+\by$. Since $\bx$ is orthogonal to $\by$, we have $\norm{\by}^2\leq 1/2 \leq \norm{\bx}^2$.
Let $\bv_1, \bv_2,\ldots,\bv_n$ be the set of orthonormal eigenvectors of $P$  with the corresponding eigenvalues $\lambda_1, \lambda_2, \ldots, \lambda_n$.
Let $\alpha>0$ be a constant that will be fixed later in the proof. Define $B:=\{i: |\langle \bx,\bv_i\rangle| \geq \alpha |\langle \by,\bv_i\rangle|\}$.
First observe that,
\begin{equation}
\label{eq:xpxlowerbound}
 \tp{\bx} P^t \bx = \sum_{i=1}^n \langle \bx,\bv_i\rangle^2 \lambda_i^t  \geq \sum_{i\in B} \langle \bx,\bv_i\rangle^2 \lambda_i^t
\geq \frac{1}{(1+1/\alpha)^2}\sum_{i\in B} \langle \bz,\bv_i\rangle^2 \lambda_i^t,
\end{equation}
where the  equality follows from equation \eqref{eq:xpxexpanding}, the first inequality uses $\lambda_{\min}(P)\geq 0$, and the last inequality follows from the definition of $B$, that is for any $i\in B$, $\langle \bx,\bv_i\rangle^2 \geq
(\langle \bz,\bv_i\rangle/(1+1/\alpha))^2$.
Let $L:=\sum_{i\in B} \langle \bz,\bv_i\rangle^2$. Then, since $\lambda_{\min}(P)\geq 0$, by Jensen's inequality,
\begin{eqnarray}
\frac{1}{L}\sum_{i\in B} \langle \bz,\bv_i\rangle^2 \lambda_i^t &\geq& \left(\frac{1}{L}\sum_{i\in B} \langle \bz,\bv_i\rangle^2 \lambda_i\right)^t\nonumber\\
&\geq&  \left(\frac{\sum_{i=1}^n \langle \bz,\bv_i\rangle^2 \lambda_i - (1-L)}{L}\right)^t\nonumber\\
&\geq & \left( 1 - \frac{1-\tp{\bz}P\bz}{(1-\alpha^2-2\alpha)/2}\right)^t,\label{eq:Bbound}
\end{eqnarray}
where the second inequality follows by the assumptions that $\lambda_{\max}(P)\leq 1$, and that $\norm{\bz}=1$,
and the last inequality follows from the fact that $\tp{\bz} P \bz \leq 1$, and that
$$ L=\sum_{i\in B} \langle \bz,\bv_i\rangle^2  = 1- \sum_{i\notin B} \langle \bz,\bv_i\rangle^2  \geq1- (1+\alpha)^2 \norm{\by}^2 \geq \frac{1-\alpha^2-2\alpha}{2}.$$
Putting equations \eqref{eq:xpxlowerbound} and \eqref{eq:Bbound}, and letting $\alpha=0.154$ we get,
$$ \tp{\bx} P^t \bx \geq \frac{1-\alpha^2-2\alpha}{2(1+1/\alpha)^2}  \left( 1 - \frac{1-\tp{\bz}P\bz}{(1-\alpha^2-2\alpha)/2}\right)^t \geq \frac{1}{200}\left( 3\tp{\bz}P\bz-2)\right)^t$$
\end{proof}

\section{Approximating the Expansion Profile}
\label{sec:expprofile}
In this section we use the machinery developed in the works of Lovasz and Simonovits \cite{LS90,LS93} to prove  \autoref{thm:sse}, \autoref{thm:abs}. We start by introducing some notations.

Let $\bp$ be a probability distribution vector on the vertices of $V$, and let $\sigma(.)$ be the permutation of the vertices that is decreasing with respect to $p(v)/d(v)$, and
breaking ties lexicographically.
That is, suppose
$$  \frac{p(\sigma(1))}{d(\sigma(1))} \geq \ldots \geq \frac{p(\sigma(i))}{d(\sigma(i))} \geq \ldots \geq \frac{p(\sigma(n))}{d(\sigma(n))}.$$
We use $T_i(\bp):=\{\sigma(1),\ldots,\sigma(i)\}$ to denote  the {\em threshold set} of the first $i$ vertices.
Following Spielman, Teng \cite{ST08} (c.f. Lovasz, Simonovits \cite{LS90}), we use the following potential function,
\begin{equation}
\label{eq:Idef} I(\bp,x):= \max_{\substack{w\in [0,1]^n \\ \sum w(v)d(v)=x}} \sum_{v\in V} w(v) p(v).
\end{equation}
Observe that for $I(\bp,x)$ is a non-decreasing  piecewise linear concave function of $x$, that is  $I(\bp,x) = p(T_j(\bp))$, for $x=\mu(T_j(\bp))$, and is linear in other values of $x$.
We use $I(\bp,x)$ as a potential function to measure the distance of the distribution $\bp$  from the stationary distribution $\bpsi$.

We find the small non-expanding set in \autoref{thm:sse}, by
running $\SSE(\sqrt{2\phi/\eps}, \eps\ln{\mu(V)}/\phi)$.
The algorithm simply returns the smallest non-expanding set among the threshold sets of the rows of $\G^t$, for $t=O(\eps \log\mu(V)/\phi)$.
%The algorithm runs a random walk from any vertex $v\in V$ and
%computes the distribution of the walk for $T=\eps \log{\mu(V)} / \phi$
%steps, and returns the set with the smallest measure among all
%threshold sets of conductance at most $\Phi=\sqrt{2\phi/\eps}$
%encountered in the running time of the algorithm.
The details are described in Algorithm~\autoref{alg:sse}.

%\floatname{algorithm}{TCut( T):}
\begin{algorithm}[htb]
\caption{$\SSE(\phi,\eps)$ }%Approximation Algorithm for Small Set Expansion Problem
\label{alg:sse}
\begin{algorithmic}
\STATE Let ${\cal T}$ be the family of all threshold sets $T_i(\tp{\b1_v} \G^t)$, for any vertex $v\in V$, and $1\leq t\leq \eps \log\mu(V)/\phi/$, with conductance at most $\sqrt{2\phi/\eps}$.
%$$ {\cal F} := \left\{ T_i ( \tp{\b1_{v}} \G^t): v\in V, 1\leq t\leq T, \phi(\tp{\b1_{v}} \G^t) \leq \Phi \right\}$$
\RETURN the set with minimum volume in ${\cal T}$.
\end{algorithmic}
\end{algorithm}

If none of the sets $T_i(\tp{\b1_{v}}\G^t)$ is a  non-expanding set, then
Lovasz and Siminovits \cite{LS90,LS93} prove that the curve $I(\tp{\b1_{v}}\G^{t},x)$ lies far below
$I(\tp{\b1_{v}}\G^{t-1},x)$.
%Lovasz and Simonovits \cite{LS90,LS93}
%First, they show that $I(\tp{\bp} \G, x)$ always lie below the curve $I(\bp,x)$. In particular,  if all of the threshold sets $T_i(\bp)$ have large conductance,
%then the curve $I(\tp{\bp} \G,x)$ lies far below $I(\bp,x)$.
This is quantified in the following lemma:
\begin{lemma}[{Lovasz, Simonovits  \cite[Lemma 1.3]{LS93}}]
\label{lem:ls}
Let $\G$ be a transition probability matrix of a lazy random walk on a graph. For any probability distribution vector $\bp$ on $V$, if $\phi(T_i(\tp{\bp} \G))\geq \Phi$, then for $x=\mu(T_i(\tp{\bp} \G))$,
$$ I(\tp{\bp} \G, x) \leq \frac12 \left(I(\bp, x-2\Phi\min(x,2m-x) + I(\bp, x+2\Phi\min(x,2m-x)\right).$$
\end{lemma}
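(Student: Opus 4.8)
The plan is to prove the inequality pointwise by exhibiting, for the distribution $\tp{\bp}\G$ restricted to the set $T_i(\tp{\bp}\G)$, a decomposition of the mass at the threshold point $x=\mu(T_i(\tp{\bp}\G))$ into two contributions, each of which is bounded by $I(\bp,\cdot)$ evaluated at a point shifted away from $x$ by the amount $2\Phi\min(x,2m-x)$. Write $S:=T_i(\tp{\bp}\G)$ and $\Phi:=\phi(S)$ under the hypothesis $\phi(S)\ge\Phi$. Since $\G=(D^{-1}A+I)/2$ is the lazy walk, for any vertex $u$ the value $(\tp{\bp}\G)(u)$ is the average of $p(u)$ and $\sum_{v\sim u}\frac{1}{d(v)}p(v)$ weighted appropriately; the key point is that $(\tp{\bp}\G)(S)=\sum_{u\in S}(\tp{\bp}\G)(u)$ can be written as $\frac12\big(p(S)+\sum_{u\in S}\sum_{v: v\sim u}\frac{p(v)}{d(v)}\big)$, and because $S$ has $\mu(S)$ half-edges of which $\partial(S)=\Phi\mu(S)$ leave $S$, the ``incoming'' term counts, with multiplicity equal to the number of edges into $S$, a collection of vertex weights $p(v)/d(v)$ whose total edge-count is $\mu(S)-\partial(S)$ from inside and $\partial(S)$ from outside.

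The main technical step is to repackage this as a statement about the concave majorant $I(\bp,\cdot)$. The clean way: observe $(\tp{\bp}\G)(S) = \tfrac12\, x_{\mathrm{in}} + \tfrac12\, x_{\mathrm{out}}$-type split is not quite right; instead I would follow Lovász–Simonovits and argue that $(\tp{\bp}\G)(S)$ equals $\tfrac12$ of the $\bp$-mass collected over a fractional vertex set of volume $x - \partial(S)$ (coming from the half the walk that stays put plus edges staying inside) plus $\tfrac12$ of the $\bp$-mass over a fractional vertex set of volume $x+\partial(S)$ (the same internal contribution plus the $\partial(S)$ edges that enter $S$ from outside). Concretely, each edge $\{u,v\}$ with $u\in S$ contributes $\tfrac12\cdot\tfrac{p(v)}{d(v)}$ to $(\tp{\bp}\G)(u)$, and summing, the internal edges (counted twice, once from each endpoint) give a weight vector $w$ with $w(v)d(v)$ equal to (number of edges from $v$ into $S$), total volume $2|E(S)| = x-\partial(S)$; adding the laziness term $\tfrac12 p(S)$ which has volume $x$, and the crossing edges of total volume $\partial(S)$, one forms two valid test vectors in the definition \eqref{eq:Idef}: one of volume $x-\partial(S) = x-\Phi x$ and one of volume $x+\Phi x$. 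Taking the average and using $I(\bp,\cdot)\ge$ (any feasible test sum) gives $I(\tp{\bp}\G,x) = (\tp{\bp}\G)(S) \le \tfrac12\big(I(\bp,x-\Phi x)+I(\bp,x+\Phi x)\big)$. Finally, to get the symmetric form with $\min(x,2m-x)$ rather than $x$, I would redo the count using $\bp' := \bpsi - \bp$ (or equivalently the reversed walk on the complement) when $x>m$, exploiting that $\tp{\bpsi}\G=\tp{\bpsi}$ so $I(\tp{\bp}\G,x)$ and $I(\bp,x)$ transform compatibly under $\bp\mapsto\bpsi-\bp$, which replaces $x$ by $2m-x$ throughout; taking the binding case yields $2\Phi\min(x,2m-x)$ — the extra factor of $2$ compared to my sketch above comes from the convention $\partial(S)=\phi(S)\mu(S)$ combined with this complement symmetrization, and I would track the constant carefully at this point.

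The step I expect to be the main obstacle is making the ``two fractional test vectors of volume $x\pm 2\Phi\min(x,2m-x)$'' construction fully rigorous — i.e., verifying that the vectors $w$ produced by the edge-counting argument genuinely lie in $[0,1]^n$ with the claimed total volume, and that the reduction $p(S)$-term, internal-edge term, and crossing-edge term really reassemble into exactly those two feasible points rather than something only asymptotically close. This is essentially bookkeeping with the bipartite incidence structure of $S$, but it is the place where an off-by-factor-of-two or a sign error would hide. Since this is exactly \cite[Lemma 1.3]{LS93}, I would, in the write-up, either cite it directly or reproduce the short argument above with the constant-tracking done explicitly; no genuinely new idea is needed beyond the Lovász–Simonovits convexity/averaging trick.
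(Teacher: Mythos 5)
The paper does not actually prove this lemma---it cites it verbatim from Lov\'asz--Simonovits---so there is no in-paper argument to compare against; I am judging your sketch on its own. Your core decomposition is the right one and, contrary to your worry, its feasibility is the easy part: writing $S=T_i(\tp{\bp}\G)$ and $c_v=\sum_{u\in S}\G(v,u)$, one has $c_v=\tfrac12+\tfrac{\deg_S(v)}{2d(v)}$ for $v\in S$ and $c_v=\tfrac{\deg_S(v)}{2d(v)}$ for $v\notin S$, and the split $c=\tfrac12(w_A+w_B)$ with $w_A(v)=\deg_S(v)/d(v)\cdot\b1[v\in S]$ and $w_B(v)=\b1[v\in S]+\deg_S(v)/d(v)\cdot\b1[v\notin S]$ manifestly has both vectors in $[0,1]^V$, with volumes exactly $x-\partial(S)$ and $x+\partial(S)$; since $S$ is a threshold set, $I(\tp{\bp}\G,x)=(\tp{\bp}\G)(S)\le\tfrac12\bigl(I(\bp,x-\partial(S))+I(\bp,x+\partial(S))\bigr)$, and by concavity of $I(\bp,\cdot)$ the right-hand side only decreases if $\partial(S)$ is replaced by any smaller nonnegative quantity.

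The genuine gap is the factor of $2$, and your proposed source for it is wrong. The complement trick $\bp\mapsto\bpsi-\bp$ only serves to replace $x$ by $2m-x$ when $x>m$ (yielding the $\min(x,2m-x)$); it does not double the offset. With this paper's convention $\phi(S)=\partial(S)/\mu(S)$, your argument (correctly) delivers offset $\partial(S)\ge\Phi\,x\ge\Phi\min(x,2m-x)$, whereas the statement claims $2\Phi\min(x,2m-x)$, which would require $\partial(S)\ge 2\Phi\,\mu(S)$ --- a hypothesis twice as strong as the one given. The factor of $2$ in the Lov\'asz--Simonovits and Spielman--Teng formulations is a bookkeeping artifact of their measuring conductance for the \emph{lazy chain} (ergodic flow over stationary measure), which equals $\phi(S)/2$ in the present notation; translated literally into this paper's definitions the provable offset is $\Phi\min(x,2m-x)$, not $2\Phi\min(x,2m-x)$. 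So you should either prove and use the version with offset $\Phi\min(x,2m-x)$ --- which still suffices for \autoref{lem:recursels} after replacing $1-\Phi^2/2$ by $1-\Phi^2/8$, changing only downstream constants --- or restate the hypothesis in terms of the lazy-walk conductance. As written, "I would track the constant carefully at this point" is precisely the step that cannot be completed in the form claimed.
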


By repeated application of the above lemma, Lovasz and Simonovits \cite{LS90}
argue that, if all of the sets $T_i(\tp{\b1_{v}}\G^t)$ are expanding,
then $I(\tp{\b1_{v}}\G^t,.)$ approaches the straight line.
In the next lemma we show that, if all of the small threshold sets (i.e., $\mu(T_i(\tp{\b1_{v}} \G^t))\leq \Gamma$), are expanding, then $I(\tp{\b1_{v}} \G^t,.)$ approaches the curve $x/\Gamma$.
\begin{lemma}
\label{lem:recursels}
For any vertex $v\in V$,  $t\geq 0$, and $0\leq \Gamma \leq m$, $0\leq \Phi\leq 1/2$, if for all $t \leq T$, all of threshold sets $T_i(\tp{\b1_{v}}\G^t)$ of volume at most $\Gamma$,
has expansion at least $\Phi$, then for any $0\leq t\leq T$,
$$ I(\tp{\b1_{v}} \G^t, x) \leq \frac{x}{\Gamma} + \sqrt{x/\mu(v)} \left(1-\frac{\Phi^2}{2}\right)^t.$$
\end{lemma}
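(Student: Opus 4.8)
The plan is to induct on $t$, with the base case $t=0$ being immediate: $I(\tp{\b1_v}\G^0,x) = I(\b1_v,x)$ equals $1$ for $x \geq d(v)$ and $x/d(v)$ for $x \leq d(v)$, which is dominated by $x/\Gamma + \sqrt{x/\mu(v)}$ since $\sqrt{x/d(v)} \geq x/d(v)$ for $x \leq d(v)$. For the inductive step, fix $t \leq T$, write $\bp := \tp{\b1_v}\G^{t-1}$, and let $\curve_t(x) := x/\Gamma + \sqrt{x/\mu(v)}(1-\Phi^2/2)^t$ be the target curve. I want to show $I(\tp{\bp}\G, x) \leq \curve_t(x)$ for all $x$. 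Since $I(\tp{\bp}\G,\cdot)$ is piecewise linear with breakpoints exactly at the volumes $x = \mu(T_i(\tp{\bp}\G))$, and $\curve_t$ is concave, it suffices to verify the bound at these breakpoints $x_i := \mu(T_i(\tp{\bp}\G))$.

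At a breakpoint $x_i$, I split into two cases according to whether the threshold set $T_i(\tp{\bp}\G)$ is small or large. If $\mu(T_i(\tp{\bp}\G)) > \Gamma$, then trivially $I(\tp{\bp}\G,x_i) \leq 1 \leq x_i/\Gamma \leq \curve_t(x_i)$. If $\mu(T_i(\tp{\bp}\G)) \leq \Gamma$, then by hypothesis $\phi(T_i(\tp{\bp}\G)) \geq \Phi$, so Lemma~\ref{lem:ls} applies and gives
\[
I(\tp{\bp}\G, x_i) \leq \tfrac12\big(I(\bp, x_i - 2\Phi\min(x_i, 2m-x_i)) + I(\bp, x_i + 2\Phi\min(x_i, 2m-x_i))\big).
\]
Now I apply the inductive hypothesis $I(\bp,\cdot) \leq \curve_{t-1}$ to both terms on the right, and it remains to check the purely analytic inequality
\[
\tfrac12\big(\curve_{t-1}(x - \delta) + \curve_{t-1}(x+\delta)\big) \leq \curve_t(x), \qquad \delta = 2\Phi\min(x, 2m-x).
\]
The linear part $x/\Gamma$ of $\curve$ is unchanged by averaging, so this reduces to showing $\tfrac12(\sqrt{x-\delta} + \sqrt{x+\delta}) \leq \sqrt{x}\,(1-\Phi^2/2)$ after dividing by $\sqrt{\mu(v)}(1-\Phi^2/2)^{t-1}$; equivalently $\tfrac12(\sqrt{1-\delta/x} + \sqrt{1+\delta/x}) \leq 1 - \Phi^2/2$. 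Writing $r := \delta/x = 2\Phi\min(1, (2m-x)/x) \leq 2\Phi \leq 1$ and using the standard estimate $\tfrac12(\sqrt{1-r}+\sqrt{1+r}) \leq 1 - r^2/8$ (from $\sqrt{1\pm r} \leq 1 \pm r/2 - r^2/8$), we get the bound $1 - (2\Phi)^2/8 = 1 - \Phi^2/2$ when $r = 2\Phi$, and the function $1 - r^2/8$ is decreasing in $r$, so the inequality holds for all $r \leq 2\Phi$ as well. This is exactly the argument already implicit in Lovasz–Simonovits; the only new feature here is carrying the additive $x/\Gamma$ term through, which survives the averaging untouched because it is linear.

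The main obstacle is not any single step but rather the bookkeeping around the two regimes: I must make sure that when I invoke the inductive hypothesis at the shifted points $x_i \pm \delta$ these still lie in $[0, 2m]$ (which is why the $\min(x, 2m-x)$ appears, keeping the shifts within range), and that the concavity reduction to breakpoints is legitimate — specifically, that $I(\tp{\bp}\G, \cdot)$, being concave and piecewise linear with breakpoints among the $x_i$, is bounded above by the concave curve $\curve_t$ as soon as the bound holds at those breakpoints (and at $x = 0$, where both sides vanish, and asymptotically, where $\curve_t(x) \to 1^-$ is handled by the large-set case). A secondary point of care is the case $x_i > m$, i.e. $2m - x_i < x_i$, where $\min(x_i, 2m-x_i) = 2m-x_i$; but since we are only ever interested in small threshold sets ($x_i \leq \Gamma \leq m$) for the nontrivial case, this does not actually arise, and for $x_i > \Gamma$ we used the trivial bound. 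Once these range and concavity checks are in place, the proof is a direct induction.
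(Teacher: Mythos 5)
Your proof is correct and follows essentially the same route as the paper's: induction on $t$, reduction by concavity to the breakpoints $x_i=\mu(T_i(\tp{\bp}\G))$, the trivial bound $I\le 1\le x_i/\Gamma$ for $x_i>\Gamma$, and \autoref{lem:ls} combined with the estimate $\tfrac12\left(\sqrt{1-2\Phi}+\sqrt{1+2\Phi}\right)\le 1-\Phi^2/2$ for $x_i\le\Gamma$. The one wobble is the sentence asserting the averaged square-root bound ``holds for all $r\le 2\Phi$ as well'' --- monotonicity of $1-r^2/8$ goes the wrong way for that inference --- but you correctly patch it yourself by noting that the nontrivial case always has $x_i\le\Gamma\le m$, so $\min(x_i,2m-x_i)=x_i$ and $r=2\Phi$ exactly.
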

\begin{proof}
We prove by induction. The lemma trivially holds for $t=0$.
This is because the LHS is $x/\mu(v)$ for $0\leq x\leq \mu(v)$ and $1$ for larger values of $x$, while the RHS is $\sqrt{x/\mu(v)}$ for all $x\geq0$.
Next, we prove the lemma's statement holds for $t$, assuming that it holds for $t-1$.
Let $\bp:=\tp{\b1_{v}} \G^{t-1}$. First of all, since $I(\tp{\bp} \G,.)$ is a piecewise-linear concave function of $x$, it is sufficient to prove the statement for values of $x=\mu(T_i(\tp{\bp} \G))$.
For $x\geq \Gamma$, the statement holds trivially, because the RHS is at least 1, while the LHS is less than or equal to 1.
Now, suppose $x<\Gamma$, and $x=\mu(T_i(\tp{\bp}\G))$.  Using \autoref{lem:ls}, we have
\begin{eqnarray*}
I(\tp{\bp} \G,x) &\leq& \frac12 \left\{I(\bp, x-2\Phi x) + I(\bp, x+2\Phi x)\right\}\\
&\leq &\frac12\left\{ \frac{2x}{\Gamma} + \sqrt{x/\mu(v)}\left(1-\frac{\Phi^2}{2}\right)^{t-1} \left(\sqrt{1-2\Phi} + \sqrt{1+2\Phi}\right)\right\} \\
&\leq & \frac{x}{\Gamma} + \sqrt{x/\mu(v)} \left(1-\frac{\Phi^2}{2}\right)^t,
\end{eqnarray*}
where the first inequality uses the assumption that $x<\Gamma\leq m$, the second inequality uses the induction hypothesis,
and the last inequality uses the inequality
$$ \frac12\left(\sqrt{1-2\Phi} + \sqrt{1+2\Phi}\right) \leq 1-\frac{\Phi^2}{2},$$
holds for any $\Phi\leq 1/2$.
\end{proof}

Now we are ready to prove \autoref{thm:sse}: using \autoref{prop:escprob} we show that $I(\tp{\b1_{v}}\G^t)$ does not converge to $x/\Gamma$, for $t\approx \log\gamma/\phi$. Therefore, by the previous lemma, at least one of the small threshold sets is non-expanding.
%Suppose $G$ has a non-expanding set $U\subset V$. By \autoref{cor:bottleneckratio}
%we know the mixing time of the random walk is at least $\Omega(-\log(\pi(U))/\phi(U))$. Therefore, Algorithm~\ref{alg:sse} must observe a small set of expansion $O(\sqrt{\phi(U)})$; otherwise, by the above lemma,  the walk mixes quickly in $o(-\log(\pi(U))/\phi(U))$ which is a contradiction.
%Now we are ready to prove \autoref{thm:sse}.

\mainsse*
\begin{proof}%{ \autoref{thm:sse}}
Let, $\gamma=\mu(U)$, $T=\eps \ln{\gamma}/\phi$, $\Gamma=2\gamma^{1+\eps}$, and $\Phi=\sqrt{2\phi/\eps}$.
We show that $\SSE(\phi,\eps)$ returns a set of volume at most $\Gamma$, and conductance at most $\Phi$.
Wlog we may assume that  $\Gamma < m$, otherwise the statement is trivial.
 We prove by contradiction; assume that
the output of the algorithm has volume larger than $\Gamma$, we show that \autoref{lem:recursels} and \autoref{prop:escprob} can not hold simultaneously.
First of all, by \autoref{prop:escprob}, there exists a vertex $u\in U$, such that
$$ \rem(u,t,U) \geq \left(1-\frac{\phi(U)}{2}\right)^t \geq \left(1-\frac{\phi}{2}\right)^{\frac{\eps\ln{\gamma}}{\phi}} \geq e^{-\eps \ln{\gamma}}  \geq \gamma^{-\eps}.$$
Let $\bp=\tp{\b1_{u}} \G^T$. Let $w(v)=1$, for all $v\in U$, and $w(v)=0$ for the rest of the vertices. By equation \eqref{eq:Idef}, we have
$$ I(\bp, \gamma) \geq \sum_{v\in V} w(v) p(u) = \sum_{v\in U} p(v) \geq \rem(u,t,U) \geq \gamma^{-\eps}. $$
On the other hand, by  \autoref{lem:recursels}, we have:
\begin{eqnarray*}
 I(\bp, \gamma) \leq \frac{\gamma}{\Gamma} + \sqrt{\gamma} \left(1-\frac{\Phi^2}{2}\right)^t
\leq  \frac{1}{2\gamma^{\eps}} + \sqrt{\gamma} (1-\frac{\phi}{\eps})^{\frac{\eps\ln{\gamma}}{\phi}} < \gamma^{-\eps},
\end{eqnarray*}
which is a contradiction.
%The first inequality uses the fact that $I(\bp,.)$ is monotonically increasing, and that $x=\mu(U)\leq \gamma$, and
The last inequality uses the fact that $\eps<1/2$.
\end{proof}

Next we show that using \autoref{lem:recursels} we can provide a simpler, and yet stronger proof of the result of Arora, Barak and Steurer \cite{ABS10}.
\mainabs*
 \begin{proof}
Wlog we assume $\phi\leq 1/2$, $\eta < \phi$, and $n^{\eta/\phi}>4$. Let $T=\eps\ln{n}/\phi$, $\Gamma=4\mu(V)n^{-\eta/\phi}$, and $\Phi=\sqrt{2\phi/\eps}$.
 We show that $\SSE(\Phi,T)$ finds a set of volume $\Gamma$ and conductance at most $\Phi$.
 We  prove by contradiction; suppose that $\SSE$ does not find such a set.
 Since $\G=\frac12(D^{-1}A + I)$, $\rank_{1-\eta/2}(\G) \geq n^{(1+\eps)\eta/\phi}$. Therefore, by the next claim, there is a vertex $u$ such that,
 $$ \tp{\b1_u}\G^t\b1_u \geq \max\left\{\frac{1}{2n},\frac{\mu(u)}{2\mu(V)}\right\} n^{(1+\eps)\eta/\phi} (1-\eta/2)^{\eps\ln{n}/\phi} \geq \max\left\{\frac{1}{2n},\frac{\mu(u)}{2\mu(V)}\right\}n^{\eta/\phi},$$
% where the last inequality uses the fact that $(1-x)^y \geq 4e^{-2xy}$, for $x\leq 1/2$ and
%for sufficiently large $n$.
 Let $\bp=\tp{\b1_{u}} \G^T$, $x=\mu(u)$, $w(v)=1$, for $v=u$ and $w(v)=0$ for the rest of the vertices. By equation \eqref{eq:Idef}, we have
 $$ I(\bp, x) \geq \sum_{v\in V} w(v) p(v) = p(u)  \geq \max\left\{\frac{1}{2n},\frac{\mu(u)}{2\mu(V)}\right\} n^{\eta/\phi}.$$
 But, by \autoref{lem:recursels}, we have
 $$ I(\bp, x) \leq \frac{x}{\Gamma} + \sqrt{x/\mu(u)} \left(1-\frac{\Phi^2}{2}\right)^T \leq \frac{\mu(u)}{4\mu(V)} n^{\eta/\phi} + \frac{1}{n},$$
 which is a contradiction, since $n^{\eta/\phi} > 4$.
 \begin{claim}
 For any graph $G$, if $\rank_{1-\eta}(\G) \geq r$, then there is a vertex $u\in V$, such that
 $$ \tp{\b1_{u}} \G^t \b1_{u} \geq \max\left\{\frac{1}{2n},\frac{\mu(u)}{2\mu(V)}\right\}\cdot r\cdot (1-\eta)^t . $$
 \end{claim}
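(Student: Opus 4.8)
The plan is to bound the return probability $\tp{\b1_u}\G^t\b1_u$ from below for a well-chosen vertex $u$ by using the spectral expansion of $\G$ restricted to its large eigenvalues. Since $\G$ is not symmetric but $D^{1/2}\G D^{-1/2}$ is, I would first pass to the symmetrized matrix $M:=D^{1/2}\G D^{-1/2}$, which is symmetric and positive semidefinite with $\lambda_{\max}(M)\le 1$. Its eigenvalues coincide with those of $\G$, so $\rank_{1-\eta}(M)=\rank_{1-\eta}(\G)\ge r$. Let $\bv_1,\ldots,\bv_r$ be orthonormal eigenvectors of $M$ with eigenvalues $\lambda_1,\ldots,\lambda_r>1-\eta$, and let $\Pi:=\sum_{i=1}^r \bv_i\bv_i^T$ be the orthogonal projection onto their span. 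Because $M$ is PSD with all eigenvalues at most $1$, we have $M^t\succeq \sum_{i=1}^r \lambda_i^t\,\bv_i\bv_i^T \succeq (1-\eta)^t\,\Pi$ (using that $f(\lambda)=\lambda^t$ is nonnegative on the remaining eigenvalues). The key identity is that $\tp{\b1_u}\G^t\b1_u = \tp{\b1_u}D^{-1/2}M^t D^{1/2}\b1_u$; writing $\be_u$ for the standard basis vector and noting $D^{1/2}\b1_u = \sqrt{\mu(u)}\,\be_u$ and $D^{-1/2}\b1_u = \mu(u)^{-1/2}\be_u$, this equals $\be_u^T M^t \be_u$. So it suffices to find a vertex $u$ with $\be_u^T \Pi\, \be_u = \norm{\Pi \be_u}^2$ large.

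Next I would average over vertices. The trace of $\Pi$ is exactly $r$, and $\tr(\Pi)=\sum_{u\in V}\be_u^T\Pi\be_u$, so the average of $\be_u^T\Pi\be_u$ over the $n$ vertices is $r/n$; hence some vertex has $\be_u^T\Pi\be_u\ge r/n \ge \tfrac{1}{2n}\cdot r$. To get the alternative bound $\tfrac{\mu(u)}{2\mu(V)}\cdot r$, I would instead average with weights proportional to degree: $\sum_{u} \tfrac{\mu(u)}{\mu(V)}\,\tfrac{1}{\mu(u)} = \tfrac{n}{\mu(V)}$ is not quite what is needed, so more carefully, observe $\sum_u \be_u^T\Pi\be_u = r$ regardless, and compare $\be_u^T\Pi\be_u$ against $\tfrac{\mu(u)}{\mu(V)}\cdot r$: if every vertex violated this, summing would give $r=\sum_u \be_u^T\Pi\be_u < \sum_u \tfrac{\mu(u)}{\mu(V)} r = r$, a contradiction. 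Thus some vertex $u$ satisfies $\be_u^T\Pi\be_u \ge \tfrac{\mu(u)}{\mu(V)}\,r$. Taking the better of the two averaging arguments at a common vertex requires a little care — one gets a single $u$ for each bound separately, but one can take the max over the two candidate vertices and use that $\G^t$ return probabilities are what they are; more cleanly, combine both inequalities $\sum_u(\be_u^T\Pi\be_u)\ge r$ with the two comparison functions simultaneously via $\be_u^T\Pi\be_u \ge \max\{\tfrac1n,\tfrac{\mu(u)}{\mu(V)}\}\cdot\tfrac r2$ holding for some $u$, since $\sum_u \max\{\tfrac1n,\tfrac{\mu(u)}{\mu(V)}\}\le \sum_u \tfrac1n + \sum_u\tfrac{\mu(u)}{\mu(V)} = 2$.

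Putting it together: for that vertex $u$,
$$\tp{\b1_u}\G^t\b1_u = \be_u^T M^t\be_u \ge (1-\eta)^t\,\be_u^T\Pi\be_u \ge \max\left\{\frac{1}{2n},\frac{\mu(u)}{2\mu(V)}\right\}\cdot r\cdot(1-\eta)^t,$$
which is the claimed bound.

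The main obstacle I anticipate is the bookkeeping in the simultaneous averaging step: getting a \emph{single} vertex $u$ that satisfies the bound with the $\max$ of the two quantities (rather than two different vertices, one for each term) requires the observation that $\sum_{u\in V}\max\{1/n,\mu(u)/\mu(V)\}\le 2$, so that a Markov-type argument on the weighted sum $\sum_u \be_u^T\Pi\be_u = r$ yields one vertex beating half the average of $\max\{1/n,\mu(u)/\mu(V)\}\cdot r$. Everything else — the symmetrization, the PSD spectral bound $M^t\succeq(1-\eta)^t\Pi$, and the identity reducing $\tp{\b1_u}\G^t\b1_u$ to $\be_u^TM^t\be_u$ — is routine given the spectral preliminaries already established in the paper.
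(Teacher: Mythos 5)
Your proposal is correct and follows essentially the same route as the paper: both arguments reduce to the trace bound (you via $\tr(\Pi)=r$ and $M^t\succeq(1-\eta)^t\Pi$, the paper via $\tr(\G^t)=\sum_i\lambda_i^t\ge r(1-\eta)^t$ directly) and then use the same averaging/exclusion step — your observation that $\sum_u\max\{1/n,\mu(u)/\mu(V)\}\le 2$ is exactly the paper's bound $\frac{|U_1|}{2n}+\frac{\mu(U_2)}{2\mu(V)}\le 1$ on the two sets of "bad" vertices — to extract a single vertex satisfying both bounds simultaneously.
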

 \begin{proof}
 Let $0\leq \lambda_1,\ldots,\lambda_n\leq 1$ be the eigenvalues of $\G$.
 We use the trace formula,
 $$\sum_{v\in V} \b1_{v} \G^t \b1_{v} = \tr(\G^t) = \sum_{i=1}^n \lambda_i^t \geq r\cdot(1-\eta)^t.$$
 Now, let $U_1:=\{v: \b1_{v}\G^t\b1_{v} < r(1-\eta)^t/2n\}$, and $U_2:=\{v: \bv1_{v} \G^t\b1_{v} < \frac{\mu(v)}{2\mu(V)}r(1-\eta)^t\}$.
 It follows that,
 $$ \sum_{v\in U_1} \b1_{v} \G^t\b1_{v} + \sum_{v\in U_2} \b1_{v} \G^t\b1_{v} < r\cdot(1-\eta)^t \left(\frac{|U_1|}{2n} + \frac{\mu(U_2)}{2\mu(V)}\right) \leq r\cdot(1-\eta)^t.$$
 Therefore, there is a vertex $u\notin U_1, U_2$ that satisfies claim's statement.
 \end{proof}
 \end{proof}

\section{Almost Optimal Local Graph Clustering}
In this section we use the volume biased ESP to design a local graph clustering algorithm with a worst case guarantee on the conductance of output set that is independent of the size of $G$. %. We prove Theorem \autoref{thm:ssefast}.

Let $(S_0,S_1,\ldots,S_\tau)$ be a sample path of the volume biased ESP, for a stopping time $\tau$. Andersen and Peres show that with a constant probability the conductance of at least one of the sets in the sample path is at most $O(\sqrt{\frac{1}{\tau}\log\mu(S_\tau)})$,
\begin{lemma}[{\cite[Lemma1,Corollar 1]{AP09}}]
For any starting set $S_0$, and any stopping time $\tau$, and $\alpha>0$,
$$ \hPP{S_0}{\sum_{i=1}^\tau \phi^2(S_i) \leq 4\alpha \ln\frac{\mu(S_\tau)}{\mu(S_0}} \geq 1-\frac{1}{\alpha}.$$
\end{lemma}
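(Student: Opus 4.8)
The plan is to exhibit a nonnegative martingale for the volume-biased ESP whose logarithm controls $\tfrac18\sum_i\phi^2(S_i)$, and then apply Markov's inequality. The starting point is a one-step identity: for any function $g$ on subsets of $V$, the definition ${\bf K}(S,S')=\frac{\mu(S')}{\mu(S)}\textup{K}(S,S')$ of the volume-biased kernel gives $\hE{g(S_1)\mid S_0=S}=\EE{S}{\frac{\mu(S_1)}{\mu(S)}g(S_1)}$, where the right-hand expectation is under the ordinary ESP; taking $g(S')=\sqrt{\mu(S)/\mu(S')}$ and recalling the growth gauge $\psi(S)=1-\E{\sqrt{\mu(S_1)/\mu(S)}\mid S_0=S}$ turns this into
$$\hE{\sqrt{\mu(S)/\mu(S_1)}\mid S_0=S}=\EE{S}{\sqrt{\mu(S_1)/\mu(S)}}=1-\psi(S).$$
Now set $M_0:=1$ and $M_t:=\sqrt{\mu(S_0)/\mu(S_t)}\cdot\prod_{i=0}^{t-1}(1-\psi(S_i))^{-1}$; the identity above shows $\hE{M_{t+1}\mid\mathcal{F}_t}=M_t$, so $(M_t)$ is a nonnegative martingale under $\hPP{S_0}{\cdot}$ (using that the volume-biased chain never reaches $\emptyset$, that $\psi(S_t)<1$ on transient states, and that $\psi(V)=0$ with $\mu(S_t)$ frozen after absorption, so $M_t$ stays constant there).

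Since the volume-biased ESP is absorbed in $V$ in a.s.\ finite time, applying optional stopping to $\tau\wedge n$ and letting $n\to\infty$ via Fatou gives $\hEE{S_0}{M_\tau}\le M_0=1$. Markov's inequality then gives $\hP{M_\tau\ge\alpha}\le1/\alpha$, so with probability at least $1-1/\alpha$ we have $M_\tau<\alpha$; taking logarithms of this inequality and using $-\ln(1-\psi)\ge\psi$ together with \autoref{prop:growthgauge} ($\psi(S_i)\ge\phi^2(S_i)/8$) yields
$$\tfrac18\sum_{i=0}^{\tau-1}\phi^2(S_i)\le\sum_{i=0}^{\tau-1}\psi(S_i)<\ln\alpha+\tfrac12\ln\frac{\mu(S_\tau)}{\mu(S_0)},$$
which is the claimed estimate up to explicit absolute constants (the shift from $\{0,\dots,\tau-1\}$ to $\{1,\dots,\tau\}$ costs only the boundary term $\phi^2(S_\tau)$, which vanishes once the walk is absorbed in $V$, or is avoided by reindexing $M_t$).

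The step I expect to be the main obstacle is making the optional-stopping argument rigorous: the correction factor $\prod(1-\psi(S_i))^{-1}$ is unbounded, so one cannot stop at $\tau$ directly and must truncate at $\tau\wedge n$, verify $\tau<\infty$ almost surely (which follows from the finite state space and the fact that the volume-biased process is conditioned to be absorbed in $V$), and then take the limit by Fatou, while also checking that the martingale identity is not spoiled at the absorbing state. Everything else — the one-step identity, verifying the martingale property, and the final logarithmic manipulation — is a routine computation.
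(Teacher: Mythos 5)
Your proof is correct and is essentially the paper's own argument: the paper cites this statement from Andersen--Peres and then immediately proves its strengthening (\autoref{lem:concentration}) using exactly the martingale $M_t=\sqrt{\mu(S_0)/\mu(S_t)}\prod_{i<t}(1-\psi(S_i))^{-1}$, optional sampling, Markov's inequality, the bound $1/(1-\psi)\geq e^{\psi}$, and \autoref{prop:growthgauge}; your treatment of optional stopping via $\tau\wedge n$ and Fatou (yielding $\hE{M_\tau}\leq 1$, which suffices) is in fact more careful than the paper's bare appeal to the optional sampling theorem. The only caveat is that the inequality you derive, $\sum_i\phi^2(S_i)\leq 8\ln\alpha+4\ln(\mu(S_\tau)/\mu(S_0))$, is the paper's strengthened form rather than the literal $4\alpha\ln(\mu(S_\tau)/\mu(S_0))$ of the cited statement --- the two bounds differ in shape (additive $\ln\alpha$ versus a multiplicative factor $\alpha$) and neither formally implies the other in every regime, but yours is the stronger and the one actually used downstream.
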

Here, we strengthen the above result, and we show the event occurs with much higher probability.
In particular, we show the with probability at least $1-1/\alpha$, the conductance of at least one of the sets in the sample path  is  at most $O(\sqrt{\frac{1}{\tau}\log(\alpha \cdot \mu(S_\tau))})$.
\begin{lemma}
\label{lem:concentration}
For any starting set $S_0\subseteq V$, and any stopping time $\tau$, and $\alpha\geq 0$,
$$ \hPP{S_0}{\sum_{i=1}^\tau \phi^2(S_i) \leq 8\left(\ln{\alpha} + \ln{\frac{\mu(S_\tau)}{\mu(S_0)}}\right)} \geq 1-\frac{1}{\alpha} $$
\end{lemma}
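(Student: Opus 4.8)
The plan is to prove the statement via a \emph{geometric} (multiplicative) martingale for the volume-biased evolving set process; this is exactly what upgrades the multiplicative factor $4\alpha$ in the cited Andersen--Peres lemma to the additive $8\ln\alpha$. Recall first that, by reversibility of the lazy walk, $\E{\mu(S_1)\mid S_0=S}=\sum_{u}d(u)\,\cPP{u}{X_1\in S}=\sum_{w\in S}d(w)=\mu(S)$, i.e.\ $\mu(S_t)$ is a martingale under the unbiased ESP, so the volume-biased ESP (whose kernel is ${\bf K}(S,S')=\tfrac{\mu(S')}{\mu(S)}\textup{K}(S,S')$) is precisely the Doob $h$-transform of the ESP by $h(S)=\mu(S)$; equivalently, for any function $f$, $\hEE{S_0}{f(S_{t+1})\mid\mathcal F_t}=\mu(S_t)^{-1}\E{\mu(S_{t+1})f(S_{t+1})\mid\mathcal F_t}$. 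Taking $f(S)=\mu(S)^{-1/2}$ and using that $\E{\mu(S_{t+1})^{1/2}\mid\mathcal F_t}=(1-\psi(S_t))\,\mu(S_t)^{1/2}$ is literally the definition of the growth gauge, we get $\hEE{S_0}{\mu(S_{t+1})^{-1/2}\mid\mathcal F_t}=(1-\psi(S_t))\,\mu(S_t)^{-1/2}$, and hence
\[
Y_t:=\left(\frac{\mu(S_0)}{\mu(S_t)}\right)^{1/2}\prod_{i=0}^{t-1}\frac{1}{1-\psi(S_i)}
\]
is a martingale under $\hPP{S_0}{\cdot}$ with $Y_0=1$ (every $S_i$ on a sample path of the volume-biased process is nonempty, so $\psi(S_i)<1$ and the factors are finite).

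Since $\tau$ is bounded (\autoref{lem:espsimulate}), the optional stopping theorem gives $\hEE{S_0}{Y_\tau}=1$, and Markov's inequality gives $\hPP{S_0}{Y_\tau\ge\alpha}\le 1/\alpha$. On the complementary event $\{Y_\tau<\alpha\}$, which has probability at least $1-1/\alpha$, taking logarithms yields $\sum_{i}\ln\frac{1}{1-\psi(S_i)}<\ln\alpha+\tfrac12\ln\frac{\mu(S_\tau)}{\mu(S_0)}$; applying the elementary inequality $\ln\frac{1}{1-x}\ge x$ and then \autoref{prop:growthgauge} ($\psi(S)\ge\phi(S)^2/8$) gives the claimed bound on $\sum_i\phi^2(S_i)$. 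Matching the precise index range $\{1,\dots,\tau\}$ and the exact constant in the displayed inequality is a cosmetic matter: the martingale naturally pays for $\psi(S_0),\dots,\psi(S_{\tau-1})$ rather than $\psi(S_1),\dots,\psi(S_\tau)$, and the one-step discrepancy is absorbed using $\phi(S_0)\le 1$ together with the convention $\tau\ge1$ (or by running the martingale one extra step).

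The heart of the proof is the identification of $Y_t$ and the verification of the martingale identity; this rests only on the two facts recalled above, namely that $\mu(S_t)$ is a martingale for the unbiased ESP (reversibility), which makes the volume-biased process its $h$-transform, and that $\E{\mu(S_1)^{1/2}\mid S_0}=(1-\psi(S_0))\,\mu(S_0)^{1/2}$ is the definition of $\psi$. Everything after that is Markov's inequality and $\ln\frac{1}{1-x}\ge x$. I expect the only genuinely delicate point to be the justification of optional stopping---here $\tau\le T$, so there is nothing to do, but for an unbounded stopping time one would pass through the stopped martingales $Y_{\tau\wedge n}$ and Fatou's lemma, using that $Y$ is nonnegative---together with the small amount of bookkeeping needed to present the bound exactly in the form stated.
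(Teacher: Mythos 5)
Your proof is correct and follows essentially the same route as the paper: the same martingale $M_t=\sqrt{\mu(S_0)/\mu(S_t)}\prod_{i<t}(1-\psi(S_i))^{-1}$, optional stopping, Markov's inequality, $\ln\frac{1}{1-x}\ge x$, and \autoref{prop:growthgauge}. The only difference is that you verify the martingale property directly via the $h$-transform identity, whereas the paper cites it from Andersen--Peres (Lemma 1); your handling of the index-range mismatch is if anything more explicit than the paper's.
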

\begin{proof}
Let
\begin{eqnarray*}
M_t&:=& \frac{\sqrt{\mu(S_0)}}{\sqrt{\mu(S_t)}} \prod_{i=0}^{t-1} \frac{1}{1-\psi(S_i)}
\end{eqnarray*}
Andersen, Peres \cite[Lemma 1]{AP09} show that $M_t$ is a martingale in the volume biased ESP. It follows  from the optional sampling theorem \cite{Williams91} that $\hE{M_\tau} = M_0 = 1$.
Thus, by the Markov inequality,  for any $\alpha>0$, we have
$$ \hP{M_\tau \leq \alpha} \geq 1-\frac{1}{\alpha}$$
By taking logarithm, from both sides of the event in the above equation we obtain,
\begin{equation}\label{eq:logmtaumarkov} \hP{\ln{M_\tau} \leq \ln{\alpha}} \geq 1-\frac{1}{\alpha}
\end{equation}
On the other hand, by the definition of $M_\tau$,
\begin{eqnarray}
\ln{M_\tau} &=& \frac12 \ln\frac{\mu(S_0)}{\mu(S_\tau)} + \ln{\prod_{i=0}^{\tau-1} \frac{1}{1-\psi(S_i)}} \nonumber \\
&\geq & \frac12 \ln\frac{\mu(S_0)}{\mu(S_\tau)} + \sum_{i=0}^{\tau-1} \psi(S_i) \nonumber \\
&\geq & \frac12 \ln\frac{\mu(S_0)}{\mu(S_\tau)} + \frac18\sum_{i=0}^{\tau-1} \phi^2(S_i),\label{eq:logmtaulowerbound}
\end{eqnarray}
where the first inequality follows by the fact that $1/(1-\psi(S_i)) \geq e^{\psi(S_i)}$, and the last inequality follows by \autoref{prop:growthgauge}.
Putting \eqref{eq:logmtaumarkov} and \eqref{eq:logmtaulowerbound} together proves the lemma.
\end{proof}

The previous lemma shows that for any $\gamma,\phi>0$, if we can run the process for $T\approx \eps\log\gamma/\phi$ steps without observing a set  larger than $\gamma^{O(1)}$, then, with  probability $1-1/\gamma$, one of the sets in the sample path must have an expansion of $O(\sqrt{\phi/\eps})$, which is what we are looking for.
Next we use the following lemma by Andersen and Peres, together with \autoref{prop:escprob}, to show that event occurs with  some non-zero probability. That is, for any $\eps<1$, with probability at least $\approx\gamma^{-\eps}$, the volume of all sets in the sample path of the process are at most $O(\gamma^{1+\eps})$. Then, by the union bound we can argue both event occur with probability at least $\Omega(\gamma^{-\eps})$.

\begin{lemma}[{Andersen, Peres \cite[Lemma 2]{AP09}}]
\label{lem:dfcoupling}
For any set $U\subset V$, $v\in U$, and integer $T>0$, the following holds,
$$ \hPP{v}{\max_{t\leq T} \frac{\mu(S_t\setminus U)}{\mu(S_t)} > \beta \esc(v,T,U)} < \frac{1}{\beta}, ~~\forall \beta>0$$
\end{lemma}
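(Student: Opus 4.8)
## Proof Proposal for Lemma~\ref{lem:dfcoupling}

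The plan is to exploit the ``strong stationary dual'' property that makes the evolving set process useful in the first place: there is a coupling, due to Diaconis and Fill \cite{DF90} (and used by Andersen and Peres \cite{AP09}), between the lazy random walk $(X_t)$ started at $v$ and the volume-biased ESP $(S_t)$ started at $\{v\}$, such that $X_t$ is distributed as $\bpsi_{S_t}$ conditioned on the sample path $(S_0,\dots,S_t)$; equivalently, at every time $t$ the current walk position $X_t$ is a uniformly-at-random (by degree) element of the current set $S_t$. I would begin by recalling this coupling and fixing notation for it.

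Under this coupling, the key observation is that for each fixed $t$,
\[
\hEE{v}{\frac{\mu(S_t\setminus U)}{\mu(S_t)} ~\middle|~ S_0,\dots,S_t} = \PP{}{X_t\notin U ~\middle|~ S_0,\dots,S_t},
\]
since $\mu(S_t\setminus U)/\mu(S_t)$ is exactly the $\bpsi_{S_t}$-probability of landing outside $U$. Now I want to control the \emph{maximum} over $t\le T$ of $\mu(S_t\setminus U)/\mu(S_t)$, so I would pass to the natural nonnegative supermartingale-type quantity. Define $Z_t := \mu(S_t\setminus U)/\mu(S_t)$. Using the coupling, one checks that $\{Z_t\}$, while not itself a supermartingale, is dominated in the appropriate sense: the process ``$X_s\notin U$ for some $s\le t$'' is monotone in $t$, and $\esc(v,T,U) = \PP{v}{\exists s\le T: X_s\notin U}$. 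The cleanest route is to introduce, for the coupled pair, the indicator-based object
\[
N_t := \hEE{v}{\mathbbm{1}\left[\textstyle\bigcup_{s\le T} X_s\notin U\right] ~\middle|~ \mathcal F_t},
\]
where $\mathcal F_t$ is the $\sigma$-algebra generated by $(S_0,\dots,S_t)$. This is a bounded martingale (a Doob martingale) with $N_0 = \esc(v,T,U)$, and I claim $N_t \ge Z_t$ for all $t\le T$: indeed, conditioned on $\mathcal F_t$, the event $\bigcup_{s\le T}\{X_s\notin U\}$ contains $\{X_t\notin U\}$, and $\PP{}{X_t\notin U\mid \mathcal F_t} = Z_t$. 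Hence $\max_{t\le T} Z_t \le \max_{t\le T} N_t$, and a nonnegative martingale with $N_0=\esc(v,T,U)$ satisfies Doob's maximal inequality $\hP{\max_{t\le T} N_t > \beta\,\esc(v,T,U)} < 1/\beta$, which gives the claim.

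The step I expect to be the main obstacle — and which must be stated carefully — is establishing the pointwise domination $N_t\ge Z_t$ rigorously, i.e.\ verifying that the Diaconis–Fill coupling really does deliver ``$X_t \mid \mathcal F_t \sim \bpsi_{S_t}$'' and that this is consistent across times $s\le t\le T$ so that the same coupled walk $(X_s)_{s\le T}$ can be used to define $N_t$ for every $t$. This is exactly the content of the strong-stationary-dual construction; I would either cite \cite{DF90, AP09} for the existence of such a simultaneous coupling, or spell out that $\gensam$ (Theorem~\ref{lem:espsimulate}) can be run alongside a walk so that the marginal of $X_s$ given the whole ESP path up to time $s$ is $\bpsi_{S_s}$. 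Everything else — the monotonicity of the event in $T$, the Doob martingale property, and the maximal inequality — is routine once that coupling is in hand.
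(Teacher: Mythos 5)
This lemma is not proved in the paper at all --- it is quoted verbatim from Andersen and Peres \cite[Lemma 2]{AP09}, so there is no in-paper argument to compare against. Your proof is correct and is in fact essentially the original Andersen--Peres argument: the Diaconis--Fill coupling gives $\mu(S_t\setminus U)/\mu(S_t)=\Prob[X_t\notin U\mid S_0,\dots,S_t]$, this is dominated by the Doob martingale $N_t$ of the escape indicator (whose initial value is $\esc(v,T,U)$), and Doob's maximal inequality for the nonnegative martingale $N_t$ finishes the proof. You correctly isolate the only delicate point, namely that the coupling must be simultaneous over all $s\le T$ with the walk marginal being the lazy walk from $v$ and the conditional law of $X_t$ given the ESP path being $\bpsi_{S_t}$, which is exactly what the strong-stationary-dual construction of \cite{DF90} provides.
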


\begin{lemma}
\label{lem:espsuccessprob}
Given $\gamma>0$, $0<\phi<1/4$, $0<\eps<1$, such that $G$ has a set $U\subset V$ of volume $\mu(U) \leq \gamma$, and conductance $\phi(U)\leq \phi$.
Let $T=\eps\ln{\gamma}/3\phi$. There is a constant $c>0$, and a subset $U^T \subseteq U$ of volume $\mu(U^T)\geq \mu(U)/2$ such that for any $v\in U^T$,
with probability at least $c\gamma^{-\eps}/8$, a sample path $(S_1,S_2,\ldots,S_T)$ of the volume biased ESP started from $S_0=\{v\}$ satisfies the following,
\begin{enumerate}[i)]
\item For some $t\in [0,T]$, $\phi(S_t) \leq O(\sqrt{100(1-\ln c)\phi/\eps})$,
\item For all $t\in [0,T]$, $\mu(S_t\cap U) \geq c\gamma^{-\eps}\mu(S_t)/2$, and henceforth, $\mu(S_t)\leq 2\gamma^{1+\eps}/c$.
\end{enumerate}
\end{lemma}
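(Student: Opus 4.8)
The plan is to combine the escaping-probability bound of \autoref{prop:escprob} with \autoref{lem:dfcoupling} and a union bound, essentially mirroring the structure of the proof of \autoref{thm:sse} but at the level of the evolving set process rather than of the function $I$. First I would invoke the second part of \autoref{prop:escprob}: since $\mu(U)\le\gamma$ and $\phi(U)\le\phi$, there is a subset $U^T\subseteq U$ with $\mu(U^T)\ge\mu(U)/2$ such that every $v\in U^T$ satisfies $\rem(v,T,U)\gtrsim (1-3\phi(U)/2)^T$. Plugging in $T=\eps\ln\gamma/(3\phi)$ and using $\phi<1/4$ gives $\rem(v,T,U)\ge c\gamma^{-\eps}$ for some absolute constant $c>0$ (here $1-3\phi/2\ge e^{-3\phi}$ for small $\phi$, so $(1-3\phi/2)^T\ge e^{-3\phi T}=\gamma^{-\eps}$ up to the constant from \eqref{eq:halfremlowerbound}). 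Equivalently $\esc(v,T,U)\le 1-c\gamma^{-\eps}$.

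Next I would apply \autoref{lem:dfcoupling} with $\beta$ chosen as a large constant, say $\beta$ so that $\beta\esc(v,T,U)\le \beta< 1 - c\gamma^{-\eps}/2$ is \emph{not} what we want — instead we want to bound $\mu(S_t\setminus U)/\mu(S_t)$ away from $1$. The right move is: \autoref{lem:dfcoupling} says $\hPP{v}{\max_{t\le T}\mu(S_t\setminus U)/\mu(S_t) > \beta\,\esc(v,T,U)} < 1/\beta$. Taking $\beta$ a constant (e.g. $\beta=2$ would give probability $<1/2$ of failure, but we need the complement event to have probability comparable to $\gamma^{-\eps}$, which it does automatically since $1-1/\beta$ is a constant). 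So with probability $\ge 1-1/\beta$ we have $\mu(S_t\setminus U)/\mu(S_t)\le \beta\esc(v,T,U)\le \beta(1-c\gamma^{-\eps})$ for all $t\le T$. Hmm — that bound is close to $1$, not small. The fix, exactly as in Andersen–Peres, is that one instead tracks $\mu(S_t\cap U)/\mu(S_t)=1-\mu(S_t\setminus U)/\mu(S_t)$ and combines with the fact that the volume-biased ESP conditioned to survive keeps $S_t\cap U$ from shrinking too fast; more precisely one uses \autoref{lem:dfcoupling} with $\beta$ \emph{depending on} $\gamma^{-\eps}$. Concretely pick $\beta=2/(c\gamma^{-\eps})=2\gamma^{\eps}/c$; then the failure probability is $<c\gamma^{-\eps}/2$, and on the good event $\mu(S_t\setminus U)/\mu(S_t)\le \beta\esc(v,T,U)\le (2\gamma^{\eps}/c)(1-c\gamma^{-\eps})$ — still bad. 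So the genuinely correct reading must be that \autoref{lem:dfcoupling}'s conclusion is useful only because $\esc$ is itself small: but here $\esc$ is close to $1$.

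So the real argument has to go the other way: we do not condition on a single long walk surviving; rather, Andersen–Peres's Lemma~2 is applied to the walk \emph{conditioned to stay in $U$}, i.e. one uses the coupling between the volume-biased ESP from $v$ and the walk killed on exiting $U$, for which the survival probability is $\rem(v,T,U)\ge c\gamma^{-\eps}$, not $\esc$. I would therefore restate the use of \autoref{lem:dfcoupling} as: with probability at least $\rem(v,T,U)-1/\beta$ over the ESP, we have both that the path "survives in $U$" in the coupled sense and that $\mu(S_t\setminus U)/\mu(S_t)\le\beta\cdot(1-\rem(v,T,U))$... The cleanest route, and the one I would actually write, is to take the statement of \autoref{lem:dfcoupling} at face value with $\beta$ a suitable function, and observe that the event "$\mu(S_t\cap U)\ge c\gamma^{-\eps}\mu(S_t)/2$ for all $t\le T$" has probability at least $c\gamma^{-\eps}/2$: this is because with probability $\ge\rem(v,T,U)\ge c\gamma^{-\eps}$ a coupled killed walk survives, and on a further constant fraction of that event (by Markov on the ratio $\mu(S_t\setminus U)/\mu(S_t)$, i.e. \autoref{lem:dfcoupling} with $\beta=O(1)$ relative to the survival scale) the set retains an $\Omega(\gamma^{-\eps})$ fraction of its mass in $U$. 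Since $\mu(U)\le\gamma$, $\mu(S_t\cap U)\le\gamma$ forces $\mu(S_t)\le 2\gamma^{1+\eps}/c$, giving part (ii).

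For part (i), I would then feed the surviving sample path into \autoref{lem:concentration}: on the event of part (ii) we have $\mu(S_T)\le 2\gamma^{1+\eps}/c$ and $\mu(S_0)=\mu(v)\ge 1$, so applying \autoref{lem:concentration} with $\alpha$ chosen a large constant multiple of $1/(c\gamma^{-\eps})$... no — again we must be careful that the two events (small volumes, and small $\sum\phi^2(S_i)$) both hold simultaneously. The union-bound structure is: the volume event holds with probability $\ge c\gamma^{-\eps}/2$ (not more), and the concentration event $\sum_{i}\phi^2(S_i)\le 8(\ln\alpha+\ln(\mu(S_T)/\mu(S_0)))$ fails with probability $\le 1/\alpha$; choosing $\alpha$ a constant times $\gamma^{\eps}/c$, the failure probability $1/\alpha\le c\gamma^{-\eps}/4$ is less than half the volume-event probability, so by a union bound both hold with probability $\ge c\gamma^{-\eps}/4$. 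Relabeling $c$, this is the claimed $c\gamma^{-\eps}/8$. On the joint event, $\sum_{i=1}^T\phi^2(S_i)\le 8(\ln\alpha+\ln\mu(S_T))=O(\log(\gamma^{1+\eps}/\gamma^{-\eps}))=O(\log\gamma)$ with the constant involving $\ln(1/c)$, hence $\min_{t\le T}\phi^2(S_t)\le O(\log\gamma)/T=O(\log\gamma)\cdot 3\phi/(\eps\ln\gamma)=O(\phi/\eps)$, which gives some $t\in[0,T]$ with $\phi(S_t)\le O(\sqrt{(1-\ln c)\phi/\eps})$, matching (i) after adjusting constants.

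The main obstacle is bookkeeping the constant $c$ and the parameter $\beta$ (equivalently $\alpha$) correctly so that all three events — survival in $U$ from \autoref{prop:escprob}, the volume-containment from \autoref{lem:dfcoupling}, and the conductance concentration from \autoref{lem:concentration} — are made to hold simultaneously with probability $\Theta(\gamma^{-\eps})$ via a single union bound, and so that the exponents ($\gamma^{1+\eps}$ in the volume, $\sqrt{\phi/\eps}$ in the conductance) come out exactly. In particular the subtle point is that \autoref{lem:dfcoupling} must be applied with $\beta$ scaling like $\gamma^{\eps}$ against a survival probability scaling like $\gamma^{-\eps}$, which is exactly the regime where previous analyses (which only needed $T=\Omega(1/\phi)$ and constant survival probability) did not have to work, and this is where the strengthened \autoref{prop:escprob} and \autoref{lem:concentration} are both essential.
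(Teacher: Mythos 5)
Your overall architecture matches the paper's: \autoref{prop:escprob} (second part) to define $U^T$ and obtain $\esc(v,T,U)\leq 1-c\gamma^{-\eps}$, then \autoref{lem:dfcoupling} for volume containment, then \autoref{lem:concentration} plus a union bound for the conductance, and the final step (part (i)) is essentially right in structure. But there is a genuine gap at the crucial middle step, and you in fact talk yourself out of the correct argument. You try $\beta=O(1)$ and $\beta=\Theta(\gamma^{\eps})$ in \autoref{lem:dfcoupling}, correctly observe that both give vacuous bounds on $\mu(S_t\setminus U)/\mu(S_t)$, and then conclude that the lemma ``is useful only because $\esc$ is itself small'' and retreat to an undeveloped heuristic about a ``coupled killed walk.'' That heuristic is not a proof: applying \autoref{lem:dfcoupling} ``with $\beta=O(1)$ relative to the survival scale'' is not a statement of the lemma, and the actual argument never conditions the ESP on a killed walk surviving.

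The missing idea is the third regime for $\beta$: take $\beta$ slightly larger than $1$, namely $\beta:=1+c\gamma^{-\eps}/2$. Writing $x:=c\gamma^{-\eps}$, this choice gives simultaneously a nonvacuous bound on the ratio, since $\beta\,\esc(v,T,U)\leq(1+x/2)(1-x)=1-x/2-x^2/2\leq 1-x/2$, so that on the good event $\mu(S_t\cap U)/\mu(S_t)\geq x/2$ for all $t\leq T$; and a success probability $1-1/\beta=(x/2)/(1+x/2)\geq x/4=c\gamma^{-\eps}/4$. This is exactly the $\Theta(\gamma^{-\eps})$ success probability needed to survive the union bound with \autoref{lem:concentration} (the paper takes $\alpha=\gamma$ there, giving failure probability $1/\gamma\leq\gamma^{-\eps}$; your $\alpha=\Theta(\gamma^{\eps}/c)$ also works and only perturbs the constant inside the logarithm). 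Everything downstream of this point in your write-up --- $\mu(S_t\cap U)\leq\mu(U)\leq\gamma$ forcing $\mu(S_t)\leq 2\gamma^{1+\eps}/c$, and $\min_{t\leq T}\phi^2(S_t)\leq\frac{1}{T}\sum_{t}\phi^2(S_t)\leq 8\left(\ln\gamma+\ln\mu(S_T)\right)/T=O\left((1-\ln c)\phi/\eps\right)$ --- is correct once this step is in place.
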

\begin{proof}
First of all, we let $U^T$ be the set of vertices $v\in U$ such that
$$ \rem(v,T,U) \geq c\left(1-\frac{3\phi(U)}{2}\right)^T.$$
 By  \autoref{prop:escprob}, there exists a constant $c>0$ such that $\mu(U^T)\geq \mu(U)/2$.
%there is a subset $U^T\subset U$ of volume $\mu(U^T)\geq \mu(U)/2$ such that
In the rest of the proof let  $v$ be a vertex in $U^T$. We have,
$$ \esc(v,T,U) %\leq 1-\rem(v,T,U)
\leq 1-c\left(1-\frac{3\phi(U)}{2}\right)^T \leq 1-c\left(1-\frac{3\phi}{2}\right)^{\frac{\eps\ln{\gamma}}{3\phi}}  \leq 1-c\gamma^{-\eps}$$
Now, let $\beta:=1+c\gamma^{-\eps}/2$. By \autoref{lem:dfcoupling}, we have
$$ \hPP{v}{\max_{t\leq T} \frac{\mu(S_t\setminus U)}{\mu(S_t)} \leq \beta \esc(v,T,U) \leq 1-\frac{c\gamma^{-\eps}}{2}} \geq 1-\frac{1}{\beta} \geq  \frac{c\gamma^{-\eps}}{4}$$
Since for any $S\subset V$, $\mu(S\setminus U) + \mu(S\cap U)=\mu(S)$, we have
$$ \hPP{v}{\min_{t\leq T} \frac{\mu(S_t\cap U)}{\mu(S_t)} \geq \frac{c\gamma^{-\eps}}{2}} \geq \frac{c\gamma^{-\eps}}{4}$$
On the other hand, let $\alpha:=\gamma$. By \autoref{lem:concentration}, with probability $1-1/\gamma$, for some $t\in [0,T]$,
%$$ \hPP{v}{\min_{t\leq T} \phi^2(S_t) \leq \frac{1}{T} \sum_{t=0}^T \phi^2(S_t) \leq \frac{8(\log{\gamma} + \log{\mu(S_T)})}{T}} \geq  1-\frac{1}{\gamma}. $$
$$ \phi^2(S_t) \leq \frac{1}{T} \sum_{t=0}^T \phi^2(S_t) \leq \frac{8(\ln{\gamma} + \ln{\mu(S_T)})}{T}. $$
Therefore, since $\eps<1$, by the union bound we have
$$\hPP{v}{\min_{t\leq T} \frac{\mu(S_t\cap U)}{\mu(S_t)}\geq\frac{c\gamma^{-\eps}}{2} ~~\bigwedge~~ \exists~t:~\phi(S_t) \leq  \sqrt{\frac{8(\ln{\gamma} + \ln{\mu(S_T)})}{T}}} \geq \frac{c\gamma^{-\eps}}{8}$$
Finally, since for any set $S\subseteq V$, $\mu(S\cap U)\leq \mu(U)\leq \gamma$, in the above  event, $\mu(S_T) \leq \frac{2\gamma^{1+\eps}}{c}$. Therefore,
$$ \phi(S_t)\leq \sqrt{\frac{8(\ln{\gamma}+\ln(2\gamma^{1+\eps}/c))}{T}} \leq \sqrt{\frac{100(1-\ln{c}) \phi}{\eps}},$$
which completes the proof.
%\begin{eqnarray*}
%\hPP{v}{\max_{t\leq T} \mu(S_t) \leq 2\gamma^{1+\eps}} \leq \hPP{v}{\max_{t\leq T} \mu(S_t) \leq \mu(U) }
%\end{eqnarray*}
\end{proof}

To prove  \autoref{thm:ssefast}, we can simply run $\gamma^{\eps}$ copies of the volume biased ESP in parallel. Using the previous lemma with a  constant probability at least one of the copies finds a non-expanding set.
Moreover, we may bound the time complexity of the algorithm using
\autoref{lem:espsimulate}.
The details of the algorithm is described in Algorithm \autoref{alg:ssefast}.
\begin{algorithm}[htb]
\caption{$\Evo(v,\gamma,\phi,\eps)$ }%Approximation Algorithm for Small Set Expansion Problem
\label{alg:ssefast}
\begin{algorithmic}[1]
\STATE Let $S_0\leftarrow \{v\}$,  and $T\leftarrow \eps\ln{\gamma}/6\phi$, and $c$ as defined in \autoref{lem:espsuccessprob}.
\STATE Run $\gamma^{\eps/2}$ independent copies of the volume biased ESP,
using the simulator $\gensam$, starting from $S_0$, in parallel. Stop each copy as soon as the length of its sample path reaches $T$.
\STATE If any of the copies finds a set $S$, of volume $\mu(S)\leq 2\gamma^{1+\eps/2}/c$, and conductance $\phi(S)\leq \sqrt{200(1-\ln c)\phi/\eps})$, stop the algorithm and return $S$.
\end{algorithmic}
\end{algorithm}

Now we are ready to prove \autoref{thm:ssefast}.
\ssefast*
\begin{proof}%of}{\autoref{thm:ssefast}}
Let $U'=U^T$ as defined in \autoref{lem:espsuccessprob}.
%Let $\cE$ be the event that the volume biased ESP succeeds in finding
%a set of volume $O(\mu(U)\gamma^{\eps})$, and conductance $O(\sqrt{\phi/\eps})$.
First of all, for any $v\in U'$, by \autoref{lem:espsuccessprob}, each copy of volume biased ESP, with  probability $\Omega(\gamma^{-\eps/2})$,  finds  a set $S$ such that  $\mu(S)\leq 2\gamma^{\eps/2}/c$,  and $\phi(S)\leq \sqrt{200(1-\ln c)\phi/\eps}$; but, since $\gamma^{\eps/2}$ copies are executed independently,  at least one of them will succeed with a constant probability.
Therefore, with a constant probability the output set will satisfy properties (1), (2)
in theorems' statement. This proves the correctness of the algorithm. 

It remains to compute the time complexity.
Let, $k:=\gamma^{\eps/2}$ be the number of copies, and  $W_1,\ldots,W_k$ be random variables indicating the work done by each of the copies in a single run of \Evo, thus $\sum_i W_i$ is the time complexity of the algorithm.
Let $M$ be the random variable indicating the volume of the output set of the algorithm; we let $M=0$ if the algorithm does not return any set.
Also, for $1\leq i\leq k$, let $X_i$ be $1/M$ if the output set is chosen from the $i^{th}$ copy, and $0$ otherwise, and let $X:=\sum X_i$.
% and $M:=\sum M_i$. In the case that the algorithm does not return any set, we have $X=M=0$.
We write $\kPP{v}{k}{.}$ to denote the probability measure of the
$k$ independent volume biased ESP all started from $S_0=\{v\}$, and $\kEE{v}{k}{.}$ for the expectation.
To prove the theorem it is sufficient to show
$$\kEE{v}{k}{X\sum_{i=1}^k W_i}=O(\gamma^{\eps}\phi^{-1/2}\log^2{n}).$$
 By linearity of expectation, it is sufficient to show that for all $1\leq i\leq k$,
$$\kEE{v}{k}{X_i\sum_{j=1}^k W_j}=O(\gamma^{\eps/2}\phi^{-1/2}\log^2{n}),$$
 By symmetry of the copies, it is sufficient to show that only for $i=1$. Furthermore, since conditioned on $X_1\neq 0$, $W_1 = \max_i W_i$, we just need to show,
% $\hEE{v}{kWX_i} = \hEE{v}{kWX_j}$ for any $i,j$.
%Therefore,
$$\kEE{v}{k}{X_1W_1}=O(\phi^{-1/2}\log^2{n}),$$
%Observe that by the definition of the algorithm, $W=kW_i$ for any $i$, and thus
%$W(X_1+\ldots+X_k)=\sum_i kX_iW_i$.
%let $w_i$ be the random variable indicating the work
%By similarity, we just show that for $i=1$.
Let $\tau$ be a stopping time, bounded from above by $T$, indicating the first time where a set $S_\tau$ of volume
$\mu(S_\tau)\leq 2\gamma^{1+\eps}/c$, and conductance $\phi(S_\tau)\leq \sqrt{200(1-\log c)\phi/\eps}$ is observed in the first copy if it is executed up to time $\tau$, and $W_1(\tau)$ as the amount of work done by that time. Observe that for any element of the joint probability space, $X_1W_1 \leq W_1(\tau)/\mu(S_\tau)$. This is because, we always have $W_1\leq W_1{\tau}$, and $X_1\leq \mu(S_\tau)$. Therefore,
$$\kEE{v}{k}{X_1W_1} \leq \kEE{v}{k}{\frac{W_1(\tau)}{\mu(S_\tau)}} =
\hEE{v}{\frac{W(\tau)}{\mu(S_\tau)}}= O(T^{1/2}\log^{3/2}{n}) = O(\phi^{-1/2}\log^2{n}),$$
where the second to last equation follows from  \autoref{lem:espsimulate}.
\end{proof}%of}
%Thus it remains to compute the time complexity of such a algorithm. Andersen and Peres in \cite{AP09}
%show that the expected cost per volume ratio of the ESP is bounded
%from above by $\sqrt{T\log\mu(V)}$:
%\begin{lemma}[Andersen, Peres]\cite[Theorem 4]{AP09}
%\end{lemma}

\section{Lower Bounds on Uniform Mixing Time of Random Walks}
\label{sec:mixingtime}
In this section we prove lower bounds on the mixing time of reversible markov chains. Since any reversible finite state markov chain can be realized as a random walk on a weighted undirected graph, for simplicity of notations, we model the markov chain as a random walk on  a weighted graph $G$.

%Let $\pi(.)$ be the stationary distribution of any reversible markov chain.
 The $\eps$-{\em mixing time} of a random walk in total variation distance is defined as
$$ \tau_{V} (\eps) := \min \left\{t: \sum_{v\in V} |\cPP{u}{X_t = v}-\pi(v)| \leq \eps, \forall u\in V \right\}.$$
The mixing time of the chain is usually defined  as $\tau_{V} (1/4)$.
The $\eps$-{\em uniform} mixing time of the chain is defined as
\begin{eqnarray}
\label{eq:uniformmixing}
 \tau(\eps) := \min\left\{ t:~\left| 1-\frac{\cPP{u}{X_t=v}}{\pi(v)}\right| \leq \eps, \forall u,v\in V\right\}.
 \end{eqnarray}
It is worth noting that the uniform mixing time  can be considerably larger than the  mixing time in total variation distance.

Let $\phi(G) := \min_{S:\mu(S)\leq \mu(V)/2} \phi(S)$. Jerrum and Sinclair \cite{JS89} prove that the $\eps$-uniform mixing time of any {\em lazy} random walk is bounded from above by,
$$ \tau(\eps) \leq \frac{2}{{\phi(G)}^2}\left(\log \frac{1}{\min_{v} \pi(v)} + \log \frac{1}{\eps}\right).$$
%Morris and Peres \cite{MP03} uses the evolving set process to impo
On the other hand, one can use $\phi(G)$ as the {\em bottleneck ratio} to provide lower-bound on the mixing time of the random walks. It follows from the Cheeger's inequality that (see e.g. \cite{LPW06}),
$$ \tau_{V}(1/4) \geq \frac{1}{4\phi(G)}. $$

In the next proposition we  prove stronger lower bounds on the uniform mixing time of any reversible markov chain.

\begin{proposition}
\label{prop:bottleneckratio}
For any  graph $G=(V,E)$, $1\geq \gamma\leq \mu(V)/2$, and $0<\eps<1$,
 $$ \tau(\eps) \geq \frac{\ln(\mu(V)/2\gamma)}{\phi(\gamma)}-2.$$
\end{proposition}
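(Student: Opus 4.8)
The plan is to use Proposition~\ref{prop:escprob} to lower-bound the probability that a $t$-step lazy random walk started from a random vertex of a near-optimal small set $S$ stays inside $S$, and then to compare this with the uniform-mixing guarantee, which forces that probability to be close to $\pi(S)=\mu(S)/\mu(V)$. If $t<\tau(\eps)$ were too small, the walk would still be too concentrated on $S$, contradicting the escape bound.

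First I would fix a set $S$ with $\mu(S)\le\gamma$ achieving $\phi(S)=\phi(\gamma)$ (such a set exists by definition of $\phi(\gamma)$; note $\mu(S)\le\gamma\le\mu(V)/2$ so $\phi(S)$ is within the regime where the two conductance conventions agree). Let $t:=\tau(\eps)$. By the definition of uniform mixing time~\eqref{eq:uniformmixing}, for every $u,v\in V$ we have $\cPP{u}{X_t=v}\le(1+\eps)\pi(v)$. Summing this over $v\in S$ gives, for every starting vertex $u$,
\[
 \cPP{u}{X_t\in S}\le(1+\eps)\pi(S)=(1+\eps)\frac{\mu(S)}{\mu(V)}\le 2\frac{\gamma}{\mu(V)}.
\]
In particular, averaging over $u\sim\bpsi_S$, we get $\EE{u\sim\bpsi_S}{\cPP{u}{X_t\in S}}\le 2\gamma/\mu(V)$. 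On the other hand, the event ``$X_t\in S$'' contains the event ``the walk stays in $S$ for all $t$ steps'', so $\cPP{u}{X_t\in S}\ge\rem(u,t,S)$, and hence $\EE{u\sim\bpsi_S}{\rem(u,t,S)}\le 2\gamma/\mu(V)$.

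Next I would invoke \eqref{eq:remlowerbound} of Proposition~\ref{prop:escprob}, which gives
\[
 \EE{u\sim\bpsi_S}{\rem(u,t,S)}\ge\left(1-\frac{\phi(S)}{2}\right)^t.
\]
Combining the two bounds, $\left(1-\phi(S)/2\right)^t\le 2\gamma/\mu(V)$, i.e. $\left(1-\phi(\gamma)/2\right)^t\le 2\gamma/\mu(V)$. Taking logarithms and using $\ln(1-x)\ge -x/(1-x)\ge -2x$ for $0\le x\le 1/2$ (or more simply $\ln(1-x)\ge -x-x^2\cdots$; any bound of the form $-\ln(1-\phi/2)\le\phi$ suffices, which holds for $\phi\le 1$ since then $\phi/2\le 1/2$), we obtain
\[
 t\cdot\phi(\gamma)\ge -t\ln\!\left(1-\frac{\phi(\gamma)}{2}\right)\ge\ln\frac{\mu(V)}{2\gamma},
\]
so that $\tau(\eps)=t\ge\ln(\mu(V)/2\gamma)/\phi(\gamma)$. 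This is in fact slightly stronger than the claimed bound; the ``$-2$'' in the statement is slack that absorbs, if desired, the precise constant in the bound $-\ln(1-\phi/2)\le\phi$ (or the fact that one may wish to avoid assuming $\phi(\gamma)\le 1$, handling the trivial case $\phi(\gamma)$ close to $1$ separately). I expect the only mildly delicate point to be the elementary logarithm estimate $-\ln(1-\phi(\gamma)/2)\le\phi(\gamma)$ and checking it holds on the full range $\phi(\gamma)\in(0,1]$ — which it does, since $-\ln(1-x)\le 2x$ for $x\in[0,1/2]$ and here $x=\phi(\gamma)/2\le 1/2$ — everything else is a direct concatenation of the uniform-mixing definition with Proposition~\ref{prop:escprob}.
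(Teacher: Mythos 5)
Your proof is correct and in fact yields the slightly stronger conclusion $\tau(\eps)\ge\ln(\mu(V)/2\gamma)/\phi(\gamma)$; the overall strategy --- playing the escape-probability lower bound on a near-optimal set $S$ against the uniform-mixing guarantee at time $\tau(\eps)$ --- is the same as the paper's. Two differences are worth noting. First, you sum the mixing condition over $v\in S$ to get $\cPP{u}{X_t\in S}\le(1+\eps)\pi(S)$ and compare with the averaged bound \eqref{eq:remlowerbound}, whereas the paper picks a single vertex $u\in S$ with $\rem(u,t,S)\ge 2\pi(S)$ and then a single $v\in S$ with $\cPP{u}{X_t=v}\ge 2\pi(v)$, directly violating \eqref{eq:uniformmixing}; these are interchangeable averaging arguments. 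Second, and more substantively: you invoke \autoref{prop:escprob}, whose proof hinges on $D^{1/2}I_S\G I_S D^{-1/2}$ being positive semidefinite, which is a consequence of laziness. The section is framed as a lower bound for arbitrary reversible chains realized as (not necessarily lazy) walks $D^{-1}A$ on weighted graphs, where this PSD property fails; the paper therefore proves a separate claim $\tp{\bpsi_S}(I_S D^{-1}A I_S)^{2t}\b1_S\ge(1-\phi(S))^{2t}$, valid only at \emph{even} times, via Jensen's inequality applied to the convex function $\lambda\mapsto\lambda^{2t}$ (this even-time restriction is also where the additive $-2$ slack in the statement comes from). Your argument is a complete proof for the lazy walk, which is how $X_t$ and $\cPP{u}{\cdot}$ are defined in the preliminaries, but to cover general reversible chains you would need the even-time variant in place of \autoref{prop:escprob}. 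Your elementary estimate $-\ln(1-\phi/2)\le\phi$ for $\phi\le 1$ is fine, and the hypothesis $\phi(\gamma)\le 1$ is automatic here since $\partial(S)\le\mu(S)$ under the paper's normalization of conductance.
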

\begin{proof}
Let $\U\subseteq V$ such that $\mu(\U)\leq\gamma$, and $\phi(\U)=\phi(\gamma)$.
%In the next claim we  show that, for any even integer $t>0$,
%$$ \EE{\substack{v\in \U\\ v\propto d(v)}}{\rem(v,t,\U)} \geq (1-\phi(\U))^{t}.$$
Let $t\geq -\ln(2\pi(\U))/2\phi(\U)-2$ be an even integer. By the next claim, and equation \eqref{eq:remprob}, there exists a vertex  $u\in \U$ such that
$$   \rem(u,t,\U) \geq \left(1-\phi(\U)\right)^t \geq 2\pi(\U).$$
Since $\cPP{u}{X_t\in \U} \geq \rem(u,t,\U)$, there is a vertex $v\in \U$ such that,
$$ \frac{\cPP{u}{X_t=v}}{\pi(v)} \geq \frac{\cPP{u}{X_t\in \U}}{\pi(\U)}  \geq \frac{2\pi(\U)}{\pi(\U)} = 2, $$
where the first inequality uses $\cPP{u}{X_t\in \U} = \sum_{v\in \U} \cPP{u}{X_t=v}$. Therefore,
$ \frac{|\cPP{u}{X_t=v} - \pi(v)|}{\pi(v)} \geq 1, $
and by equation \eqref{eq:uniformmixing}, for any $\eps<1$,
 $\tau(\eps)  \geq t.$ The proposition follows from the choice of $S$; that is $ t > \frac{\ln(\mu(V)/2\gamma)}{\phi(\gamma)}-2$.
\begin{claim}
For any (weighted) graph $G$, $\U\subseteq V$, and integer $t>0$,
$$ \tp{\bpsi_{\U}} (I_{\U}D^{-1}AI_{\U})^{2t} \b1_{\U} \geq (1-\phi(\U))^{2t}$$
\end{claim}
The proof  is very similar to \autoref{prop:escprob}, except, here $I_SD^{-1}AI_S$ is not (necessarily) a positive semidefinite matrix. This is the reason that we prove the inequality only for {\em even} time steps of the walk.
\begin{proof}
Let $P:=D^{-1/2}I_S A I_S D^{-1/2}$ be a symmetric matrix. By equation \eqref{eq:rightPsqrt}, $1-\phi(S) = \tp{\bpsi_\U} (I_SD^{-1}AI_\U) \b1_\U$.
Using equation \eqref{eq:leftPsqrt}, the claim's statement is equivalent to the following equation:
\begin{equation}
 \tp{\sqrt{\bpsi_\U}} P^{2t} \sqrt{\bpsi_\U} \geq \left(\tp{\sqrt{\bpsi_\U}} P \sqrt{\bpsi_\U}\right)^{2t}
 \label{eq:mixingequi}
 \end{equation}
 The above inequality can be proved using techniques similar to \autoref{lem:matrixnormmon}.
 Let $\bv_1,\ldots,\bv_n$ be the eigenvectors of $P$, corresponding to the eigenvalues $\lambda_1,\ldots,\lambda_n$.
By equation \eqref{eq:xpxexpanding}, \eqref{eq:mixingequi} is equivalent to the following equation:
$$ \sum_{i=1}^n \langle \sqrt{\bpsi_\U},\bv_i\rangle^2 \lambda_i^{2t} \geq \left(\sum_{i=1}^n \langle \sqrt{\bpsi_\U},\bv_i\rangle^2 \lambda_i\right)^{2t}.$$
Since $\sqrt{\bpsi_\U}$ is a norm one vector, and $f(\lambda)=\lambda^{2t}$ is a convex function, the above equation holds by the Jensen's inequality.
\end{proof}
\end{proof}

We remark that the above bound only holds for the uniform mixing time, and it can provide much stronger lower bound than the bottleneck ratio, if  $\gamma \ll \mu(V)$.

%\subsection{Simultaneous Work}
%Independent of our work Kwok and Lau \cite{KL12} have obtained almost a different proof of \autoref{thm:sse}, and proved that the non-expanding can be computed in sub-linear time.
%Also, independently, O'donnel and Wirk \cite{OW12} obtained a different proof of \autoref{thm:}

\subsection*{Acknowledgements.}
We would like to thank Or Meir and Amin Saberi for stimulating discussions.
We also thank anonymous reviewers for helpful comments on the earlier version of this document.
\bibliographystyle{alpha}
\bibliography{references,hs}
\end{document}